\documentclass[a4paper,11pt]{article}

\usepackage{geometry}

\geometry{a4paper,hcentering,vcentering,outer=3cm,top=3cm}




\usepackage{amssymb}
\usepackage{dsfont} 
\usepackage{amsthm}
\usepackage{harvard}
\usepackage{amsopn}
\usepackage{amsmath}
\usepackage{booktabs,caption,fixltx2e}
\usepackage{hyperref}
\usepackage{adjustbox}
\usepackage{multirow}
\usepackage{subfigure}
\usepackage{color, colortbl}
\usepackage{xurl}
\usepackage{placeins}
\usepackage{setspace}

\usepackage{comment}

\newcommand{\ds}{\displaystyle}

\newcommand{\R}{\mathbb{R}}

\newcommand{\E}{\mathbb{E}}
\newcommand{\N}{\mathbb{N}}

\newcommand{\D}{\mathcal{D}}

\newcommand{\one}{\mathds{1}}

\newcommand{\LGD}{{\rm LGD}}
\newcommand{\PD}{{\rm PD}}
\newcommand{\ELGD}{{\rm ELGD}}
\newcommand{\VLGD}{{\rm VLGD}}

\newtheorem{theorem}{Theorem}[section]

\newtheorem{lemma}[theorem]{Lemma}

\usepackage{algorithm2e}


\title{Measuring Name Concentrations through Deep Learning}

\author{
	Eva L\"{u}tkebohmert$^1$\footnote{Corresponding author, email: eva.luetkebohmert@finance.uni-freiburg.de}
	\, and Julian Sester$^2$
}

\begin{document}
	\onehalfspacing
	\maketitle
	\vspace{-6ex}
	
	\begin{center}
		\small\textit{$^{1}$Department of Quantitative Finance,\\
			Institute for Economic Research, University of Freiburg,\\
			Rempartstr. 16,	79098 Freiburg, Germany.\\[2mm]
		$^{2}$National University of Singapore, Department of Mathematics,\\ 21 Lower Kent Ridge Road, 119077}            
	\end{center}

	\begin{abstract}
    We propose a new deep learning approach for the quantification of name concentration risk in loan portfolios. Our approach is tailored for small portfolios and allows for both an actuarial as well as a mark-to-market definition of loss. The training of our neural network relies on Monte Carlo simulations with importance sampling which we explicitly formulate for the CreditRisk$^{+}$ and the ratings-based CreditMetrics model. Numerical results based on simulated as well as real data demonstrate the accuracy of our new approach and its superior performance compared to existing analytical methods for assessing name concentration risk in small and concentrated portfolios. \\
		\noindent 
		\textbf{Keywords:}
		risk management, machine learning, granularity adjustment, name concentration risk, multilateral development banks\\
		\noindent{\textbf{JEL classification:} C15, G24, G28, G32, C45}
	\end{abstract}





\section{Introduction}

Specialized institutions often hold loan portfolios with a limited number of borrowers, making them highly exposed to name concentration risk, i.e. to the undiversified idiosyncratic risk associated with the default of individual borrowers.
Prominent examples are Multilateral Development Banks (MDBs), which lend to developing countries to enhance economic growth, reduce poverty and adapt to and mitigate the effects of climate change. Due to their development mandate, MDBs' loan portfolios typically consist of only a small pool of sovereign borrowers, leaving them particularly vulnerable to single name concentration risk.
For instance, the World Bank's International Bank for Reconstruction and Development (IBRD) had only 78 sovereign borrowers as of June 2022, while the regional MDBs have even fewer. 
While existing analytical methods for quantifying name concentration risk are highly accurate for typical commercial bank portfolios, their precision declines significantly when applied to portfolios with fewer than one hundred borrowers, such as those usually held by MDBs (compare \cite{LuetkebohmertSesterShen2023}).

In this paper, we propose a new deep learning (DL) approach for quantifying undiversified idiosyncratic risk. Our neural network (NN) approach is tailored for the small and highly concentrated portfolios and allows to measure concentration risk both on both an actuarial basis and a mark-to-market (MtM) basis. More specifically, we train a feed-forward NN to learn the difference between the Value-at-Risk (VaR) of the true portfolio loss distribution and the VaR of the asymptotic loss distribution when all idiosyncratic risk is diversified away. While the latter can be calculated analytically within the single-factor conditional independence framework underlying various industry models of portfolio credit risk, the VaR of the true loss distribution relies on Monte Carlo (MC) simulations. To speed up the computations, we apply Importance Sampling (IS). Thus, as a further contribution, we explicitly formulate algorithms for calculating portfolio VaR based on MC simulation with IS in two popular industry models, the actuarial CreditRisk$^{+}$ model and the rating-based CreditMetrics model. The portfolios used in the training process are simulated based on appropriate parametric distributions for the various input variables. 

We then evaluate the accuracy of our proposed method in comparison to existing analytical approximations as the granularity adjustment (GA) of \cite{GordyLuetkebohmert2013}, which is currently used in \cite{S&P2018} for the evaluation of MDB capital adequacy, and the GA for the MtM CreditMetrics model as developed in \cite{GordyMarrone2012}. We benchmark all methods with the exact calculations of name concentration risk based on MC simulations. 
First, we assess the accuracy of our NN based on (out-of-sample) simulated test portfolios where portfolio characteristics are sampled according to the same distribution as in the training phase. Our numerical results document that our methodology is much more accurate than the benchmark analytical approximations. 
Next, we perform various sensitivity analyses to investigate the impact of different portfolio characteristics on the performance of our DL approach. The fast and accurate estimates of our NN allow, in particular, to study how deviations in the creditworthiness or changes in the exposure share of individual borrowers may impact the level of name concentrations in the total portfolio. This in turn has important implications for the investment decisions of institutions and the risk assessment of loan portfolios. Comparable sensitivity evaluations using MC simulations would be much more time consuming.
We then show that the trained network successfully generalizes to unseen data which may not exactly follow the parametric distributions used to create the training dataset. To this end, we apply our NN to real portfolios that have been constructed based on the publicly available financial statements of eleven MDBs. Our results display the superior performance of our method compared to the benchmark analytical GAs also for the real portfolio data. Finally, we examine the application of our methodology to stress testing. Given that the DL approach, once trained, enables instant calculation of GAs, it is highly suited to this purpose. Our results further demonstrate the robustness of the NN-based GA across varying market conditions.   \\

Our paper relates to the broader literature on concentration risk. We compare our NN-based GA to the analytical approximation GAs of \cite{GordyLuetkebohmert2013} and \cite{GordyMarrone2012}. Both approaches build on the GA methodology developed in \cite{Gordy2003}, \cite{Wilde2001} and \cite{MartinWilde2002}, which was introduced as an adjustment to the Asymptotic Single Risk Factor (ASRF) model in \cite{Basel2001CP2}. Mathematically, the methodology builds on the work of \cite{Gourieroux}. Further contributions to this field are due to \cite{PykhtinDev2002}, \cite{Wilde2001IRB}, \cite{EmmerTasche2004} and \cite{Voropaev}. \cite{UbertiFigini2010} suggest an index to measure name concentration risk, which is simple to compute but less accurate than the GA methodology. \cite{GrippaGornicka2016} apply MC simulation to the part of the portfolio consisting of large exposures, whereas the analytic approximation is applied to the part of the portfolio consisting of small exposures. This reduces the computational burden while maintaining accuracy and flexibility. However, for the small portfolios we target in this paper, the approach would basically amount to applying the full MC simulation to the entire portfolio. The empirical work of \cite{GordyLuetkebohmert2013}, \cite{TarashevZhu2008}, \cite{Heitfieldetal2006} and \cite{GuertlerHeitheckerHibbeln} documents the accuracy of the approximate GAs for commercial bank portfolios consisting of at least several hundred obligors. Recent work of \cite{LuetkebohmertSesterShen2023}, however, shows that these methods are substantially less reliable for portfolios with fewer than one hundred borrowers. 
Our NN-based GA is tailored exactly for this application, i.e. for very small and concentrated portfolios, where currently existing methods either fail because they rely on approximations that provide accurate results only for larger portfolios, or are computationally expensive because they require many simulations to obtain accurate estimates.

The results of our quantitative analyses have important practical implications for the risk evaluation of MDBs. The three major credit rating agencies (CRAs) each take different approaches to incorporating concentration risk into their evaluation criteria. While Moody's and Fitch follow a more qualitative approach, S\&P's methodology is based on the GA derived in \cite{GordyLuetkebohmert2013}. The latter has originally been developed to evaluate concentration risks in commercial banks' loan portfolios which even for relatively small portfolios are substantially larger than typical MDB portfolios and hence are much less concentrated. 
The S\&P's approach has been criticized for various reasons (see for instance \cite{Perraudin2016} and \cite{Humphrey2015,Humphrey2018}).  
It has been specifically recommended in \cite{CAFReview2022} that MDB capital adequacy
frameworks should appropriately account for concentration risk in MDB
portfolios and that rating agencies’ methodologies should be further evaluated in
this respect (compare Recommendations 1B on p. 28 and 4 on p. 41). The recent study \cite{LuetkebohmertSesterShen2023} addresses this task and shows that the S\&P's approach can substantially overestimate name concentration risk in MDB portfolios. This, in turn, may significantly reduce MDBs' overall lending headroom. Indeed, theoretical work by \cite{BakerWurgler2015} indicates that stricter capital requirements may lead to an increase in banks' costs of capital which in turn decreases their lending capacity. Empirical work in \cite{PeekRosengren1997,HoustonJamesMarcus1997,BernankeLown1991,HancockLaingWilcox1995, BrinkmannHorvitz1995}, suggests that banks adjust their lending after capital shocks. \cite{KashyapSteinHanson2010} provide evidence that higher capital requirements may lead to a reduction in lending and a shift of lending activities to the shadow banking sector. Although MDBs are unregulated institutions and therefore not subject to specific capital requirements, they rely heavily on bond markets for wholesale funding (as opposed to commercial banks which have deposits as funding source). This creates a strong incentive for them to maintain high credit ratings, which in turn imposes conditions on capital they must hold, particularly to mitigate name concentration risk -- a major and barely diversifiable risk factor, especially given their development mandate.
\cite{Humphrey2017} shows how the assessment of MDBs' financial strength by rating agencies has a demonstrably significant impact on their scope for lending and thus on the achievement of their development goals. 
The important role of rating agencies for access to the capital markets was also pointed out in a more general context, for example in \cite{Kruck2016}. Further, \cite{bolton2012credit} and \cite{becker2011did} show how competition among rating agencies can reduce market efficiency. With regard to MDBs, \cite{Settimo2019} suggests that the rating methodologies should take more appropriate account of the special features of MDB portfolios, which would then lead to greater scope for MDB lending.
Our novel DL-based approach offers a solution to this problem in providing a fast and accurate methodology that is tailored to small and concentrated portfolios such as MDB portfolios. Our empirical results document that the approach is very accurate and suggests a substantial reduction in capital adequacy requirements. In this way and in light of the above mentioned literature, our accurate NN-based GA may lead to an increased lending capacity for MDBs, which ultimately can be used to support physical and social development goals.

The paper is structured as follows. In Section \ref{sec measures of concentration risk} we define an exact measure for single name concentration risk in loan portfolios under both an actuarial and a MtM setting. Section \ref{sec IS} addresses the simulation of portfolio VaR in both the CreditRisk$^{+}$ and the CreditMetrics models using IS. We then develop a DL-based approach to assess name concentration risk and prove its approximation property in Section \ref{sec deep learning}. Section~\ref{sec performance evaluation} examines the sensitivity of the GA to various input parameters, evaluates our proposed NN-based GA in comparison to exact MC-based and approximate GA calculations using realistic MDB portfolios, and investigates its application to stress testing. Finally, Section \ref{sec conclusion} concludes. 

\section{Quantifying Portfolio Concentration Risk}\label{sec measures of concentration risk}

Denote current time by $t=0$ and suppose we want to measure portfolio concentration risk over a time period of length $T>0$. 
Consider a portfolio of $N$ obligors and suppose all loans have been aggregated on obligor level so that there is only one loan to each obligor.
Denote by $L_n$ the loss rate of obligor $n$ and by $L  \in [0,1]$ the total portfolio loss ratio $L=\sum_{n=1}^N L_n $. 
Let $\alpha_q(Y)=\inf\{y\in\mathbb{R}: \mathbb{P}(Y\geq y)\leq 1-q\}$ denote the Value-at-Risk (VaR) at level $q\in (0,1)$ of a random variable $Y$ defined as the $q^{\rm th}$ quantile of the distribution of $Y$.
We consider a single factor model and denote by $X$ its realization at the time horizon~$T$. Further, we assume a standard conditional independence framework, i.e.\,conditional on the systematic risk factor $X$ all loss variables $L_n$ are independent across borrowers $n=1,\ldots,N$. When the number $N$ of borrowers in the portfolio grows $N\rightarrow \infty$, the portfolio becomes infinitely fine grained and $|L-\mathbb{E}[L|X]|\rightarrow 0$ almost surely. As shown in \cite{Gordy2003}, this implies that $|\alpha_q(L) - \alpha_q(\mathbb{E}[L|X])|\rightarrow 0$ as $N\rightarrow \infty$ and under some mild regularity conditions it holds that $\alpha_q(\mathbb{E}[L|X])=\mathbb{E}[L|\alpha_q(X)]$. The undiversified idiosyncratic risk of the portfolio is then described by the difference 
\begin{equation}\label{equ def CR}
GA_q^{\rm exact}(L)\equiv \alpha_q(L)- \mathbb{E}[L|\alpha_q(X)],
\end{equation}
which we label the \textit{exact GA} at level $q$ for single name concentration risk in the loan portfolio.
To make this expression more explicit, we have to specify how loss is measured. 
Therefore, first note that the above definition of concentration risk can be applied to any risk factor model of portfolio credit risk, no matter whether losses are measured on actuarial or MtM basis. 
In the following, we will discuss these two alternative definitions of loss in more detail and formulate the corresponding exact GAs in both approaches.

\subsection{Actuarial Approach}\label{sec actuarual}

Let $A_n$ denote the (deterministic) exposure at default and $a_n=A_n/\sum_{i=1}^N A_i$ the exposure share of borrower $n$. Denote by $U_n$ the loss rate of borrower $n$. Under an actuarial definition of loss, we can define $U_n=\LGD_n\cdot D_n,$ where $D_n$ denotes the default indicator equal to 1 if the borrower defaults and 0 otherwise. The portfolio loss rate $L$ can then be expressed as
\begin{equation}\label{eq_defn_L_actuarial}
L=\sum_{n=1}^N a_n U_n=\sum_{n=1}^N a_n \LGD_n D_n.
\end{equation} 
In line with the CreditRisk$^{+}$ model, we assume that conditional on a realization $X=x$ of the systematic risk factor $X$, defaults are independent across borrowers and conditional default probabilities are given as
\begin{equation}\label{equ cond PD actuarial}
\pi_n(x)=\PD_n (1+\omega_n(x-1)),
\end{equation}
where $\PD_n$ denotes the unconditional default probability of borrower $n$ and $\omega_n$ is the sensitivity of borrower $n$ to the systematic risk factor $X$. We assume $X$ to be Gamma-distributed with mean 1 and variance $1/\xi$ for some $\xi>0$. Given a realization $X=x$ of the risk factor, the default indicators $D_n$ of obligors $n=1,\ldots, N$ are conditionally independent Bernoulli random variables with parameter $\mathbb{P}(D_n=1|X=x)=\pi_n(x).$ 
We assume the loss given default $\LGD_n$ is a beta-distributed random variable with mean $\ELGD_n$ and variance $\VLGD_n^2=\nu\cdot \ELGD_n (1-\ELGD_n)$ for some fixed volatility parameter $\nu>0$ and independent of the systematic risk factor $X$. 

Thus, the exact GA given in (\ref{equ def CR}) can be expressed as
\begin{equation}\label{equ GA exact actuarial}
    GA^{\rm exact,act}_q(L)= \alpha_q\left(\sum_{n=1}^N a_n \LGD_n  D_n\right)- \sum_{n=1}^N a_n \ELGD_n \pi_n(\alpha_q(X)).
\end{equation}
While the second term is in closed form, the first term needs to be calculated by MC simulation. We present an IS approach for its efficient computation in Section \ref{sec IS CR+}. The GA of \cite{GordyLuetkebohmert2013} is an analytical approximation for expression (\ref{equ GA exact actuarial}), which we briefly review in the e-companion \ref{app actuarial GA}.

\subsection{Mark-to-Market Approach}\label{subsec MtM}

Denote by $A_n$ the current market value of the exposure to obligor $n$ 
and by $a_n=A_n/\sum_{i=1}^N A_i$ the portfolio weight of this exposure. We assume that all positions in the portfolio consist of defaultable coupon bonds. Then, $A_n$ refers to the notional amount outstanding.  
Similarly to the actuarial setting, the borrower specific loss given default variable $\LGD_n$ is modelled as a beta-distributed random variable with mean $\ELGD_n$ and variance $\VLGD_n^2=\nu\cdot \ELGD_n (1-\ELGD_n)$ for some fixed volatility parameter $\nu>0$ and independent of all other sources of risk.


Following \cite{GordyMarrone2012} we define the loss on borrower $n$ in the MtM setting as the difference between expected return $\mathbb{E}[R_n]$ and realized return $R_n$ on position~$n$, discounted to the current date at the (continuously compounded) risk-free interest rate~$r$, where return $R_n$ is defined as the ratio of the market value at time $T$ to the current time $0$ market value. Thus, the loss on borrower $n$ is $L_n=(\mathbb{E}[R_n]-R_n)e^{-rT},$ which is expressed as a percentage of current market value of exposure to borrower $n$. Hence, the total portfolio return is $R=\sum_{n=1}^N a_n R_n$ and the portfolio loss rate equals
\begin{equation}\label{equ portfolio loss MtM}
L=\left(\mathbb{E}[R]-R\right)\cdot e^{-rT}.
\end{equation}
We apply a ratings-based approach so that the state of obligor $n$ at time $T$ can only take a finite number $S$ of possible states $S_n\in\mathcal{S}=\{0,1,\ldots, S\}.$ State $0$ is the absorbing default state and state $S$ dentoes the highest possible rating (e.g. AAA rating in S\&P's rating classification).
Denote by $p_{gs}$ the transition probability from current rating $g$ to rating $s$ at time $T$. 
The latent asset return $Y_n$ associated with borrower $n$ is modelled as
$$
Y_n=\sqrt{\rho_n} X+\sqrt{1-\rho_n} \epsilon_n,
$$
where the single systematic risk factor $X\sim\mathcal{N}(0,1)$ 
and $\rho_n$ determines the asset correlation of borrower $n$. The idiosyncratic risk factors $\epsilon_n$ are assumed to be independent and identically distributed $\mathcal{N}(0,1)$ as well as independent of $X$. Hence, asset returns are independent across borrowers given the systematic risk factor and the conditional probability that obligor $n$ is in state $S_n=s$ at time $T$ given $X=x$ equals
\begin{equation}\label{eq_pi_def}
\pi_{ns}(x)=\Phi\left(\frac{C_{g(n),s}-x\sqrt{\rho_n}}{\sqrt{1-\rho_n}}\right)-\Phi \left(\frac{C_{g(n),s-1}-x\sqrt{\rho_n}}{\sqrt{1-\rho_n}}\right).
\end{equation}
Here, $C_{g(n),s}$ and $C_{g(n),s-1}$ denote the threshold values so that borrower $n$ with current rating $g(n)$ is in rating $s$ at time $T$ when $C_{g(n),s-1}<Y_n\leq C_{g(n),s}$. The borrower defaults when $Y_n\leq C_{g(n),0}$ and we set $C_{g(n),-1}=-\infty$ and $C_{g(n),S}=+\infty.$ The thresholds are determined from the transition probabilities $p_{sg}$ as
\begin{equation}\label{equ thresholds C_gs}
C_{g,s}=\Phi^{-1}\left(\sum_{i=0}^s p_{g,i}\right),\quad\mbox{for}\quad s=0,\ldots,S-1.
\end{equation}
Note that all borrowers with the same current rating $g$ face the same thresholds for transitioning to rating $s$ at time $T$. 

We adopt the CreditMetrics setting as in \cite{GordyMarrone2012} and suppose position $n$ consists of a bond with face value 1, maturity in $\tau>T$ years, coupon payments of $c_n \delta$ at times $t_1,\ldots,t_{m}=\tau$ with accrual period $\delta=t_i-t_{i-1}$ (expressed as fraction of year), and current market value $P_{n0}$.\footnote{The position is afterwards scaled by the exposure share $a_n$ when determining the portfolio loss rate. The total market value of position $n$ is then given by $A_n P_{n0}$.}
The forward value at time $t>0$ of the analogous default-free bond is
$$
F(t)=c_n\delta\sum_{k=1}^m \one_{\{t_k\geq t\}} e^{-r(t_k-t)}+e^{-r(t_m-t)}.
$$
When borrower $n$ is in rating grade $g(n)$ at time $0$, the value of the defaultable bond $n$ at current time $0$ equals
\begin{align}\label{equ P_n0}
P_{n0}&=F(0)-\sum_{k=1}^m e^{-rt_k} (p_{g(n)}^*(0,t_k)-p_{g(n)}^*(0,t_{k-1}))
\cdot \ELGD_n\cdot  F(t_k),
\end{align}
assuming that, in the event of default, the bondholder claims the no-default value of the bond (compare \cite{HullWhite1995}, \cite{JarrowTurnbull1995}). 
Here, $p_g^*(u,v)$ denotes the risk-neutral probability that an obligor in rating grade $g$ at time~$u$ defaults before time $v$. 
Moreover, the value of the bond $n$ at time $T$ conditional on a non-default state $S_n=s$ equals
\begin{align}
\label{equ P_nT}
P_{nT}(s)&= c_n\delta \sum_{k=1}^m \one_{\{t_k< T\}} e^{r(T-t_k)} + F(T)\\[4mm]
& -\sum_{k=1}^{m} \one_{\{t_k>T\}} e^{-r(t_k-T)} \left(p^*_s(T,t_k)-p^*_s(T,t_{k-1})\right) \cdot \ELGD_n \cdot F(t_k)\nonumber
\end{align}
The first term in equation (\ref{equ P_nT}) is the compounded value of all coupons up to time $T$, the second term is the forward value of the analogous default-free bond at time $T$ and the third term represents the discounted expected value of all cash flows later than $T$ if the borrower defaults up to maturity $t_m$ of the bond. For simplicity, we assume here that the time horizon $T$ coincides with a coupon payment date $t_k$. Otherwise, the expression $p^*_s(T,t_{k-1})$ should be replaced by 0 if $T>t_{k-1}$.
The value of the bond conditional on the borrower being in the default state $s=0$ at time $T$ can be expressed as
\begin{align}\label{equ P_nT(0)}
P_{nT}(0)=c_n\delta\sum_{k=1}^m \one_{\{t_k<T\}} e^{r(T-t_k)}+(1-\LGD_n)F(T),
\end{align}
assuming that the obligor defaults just before the time horizon $T$. Note that we consider a realization of the random variable $\LGD_n$ here instead of $\ELGD_n$ since $P_{nT}(0)$ is conditional on the default state.
The return on position $n$ is then given by
\begin{equation}\label{equ return R_n}
\begin{aligned}
    R_n&= \sum_{s=0}^S \frac{P_{nT}(s)}{P_{n0}}\cdot\one_{\left\{C_{g(n),s-1}< ~Y_n~\leq C_{g(n),s}\right\}}.
\end{aligned}
\end{equation}

The conditional expectation of the individual and total portfolio return given a realization $X=x$ of the systematic risk factor can be expressed as
\begin{align*}
    \mu_n(x)\equiv \mathbb{E}[R_n|X=x]\quad \mbox{and}\quad
    \mu(x)=\mathbb{E}[R|X=x]=\sum_{n=1}^N a_n \mu_n(x)
\end{align*}

Note that when calculating the GA through equation (\ref{equ def CR}), the expected return $\mathbb{E}[R]$ in the expression of the portfolio loss rate (\ref{equ portfolio loss MtM}) cancels out. Moreover, in the presented setting $-X$ is the systematic risk factor (as highly negative realizations of $X$ correspond to bad scenarios). Thus, we obtain the following expression for the exact GA in the MtM setting
\begin{align}
    &GA^{\rm exact,MtM}_q(L)= \alpha_q(L)- \E[L~|~\alpha_{q}(-X)]  = \alpha_q(L)- \E[L~|~\alpha_{1-q}(X)]\label{equ GA exact MtM} \\
    =& e^{-rT}\bigg[ \sum_{n=1}^N a_n \cdot \mu_n(\alpha_{1-q}(X)) +\alpha_q\left(-\sum_{n=1}^N a_n \sum_{s=0}^S \frac{P_{nT}(s)}{P_{n0}}\cdot\one_{\left\{C_{g(n),s-1}< ~Y_n~\leq C_{g(n),s}\right\}}\right) \bigg].\nonumber
\end{align}
The first term can be further simplified under some mild assumptions. Therefore, following \cite{GordyMarrone2012}, we decompose the conditional expected return as
\begin{equation}\label{equ mu_n(x)}
\mu_n(x)=\mathbb{E}[R_n|X=x]=\sum_{s=0}^S \lambda_{ns}(x) \cdot \pi_{ns}(x)=\langle \Pi_n(x),\Lambda_n(x)\rangle,
\end{equation}
where $\pi_{ns}(x)\equiv \mathbb{P}(S_n=s|X=x)$ and $\lambda_{ns}(x)\equiv \mathbb{E}[R_n|X=x,S_n=s],$ while
$\Pi_n(x)$, resp. $\Lambda_n(x)$, is the vector of $\{\pi_{ns}(x)\}$, resp. of $\{\lambda_{ns}(x)\}.$
Moreover, in line with the CreditMetrics model, we assume that market credit spreads at the time horizon $T$ are solely functions of the rating grade, i.e.
\begin{equation}\label{equ lambda_ns}
\lambda_{ns}(x)\equiv \mathbb{E}[R_n|X=x,S_n=s]=\mathbb{E}[R_n|S_n=s]\equiv \lambda_{ns}.
\end{equation}
Hence, we can express the expected return on position $n$ given $X=x$ as
\begin{equation}\label{equ expected return given x}
    \mu_n(x)=\sum_{s=0}^S \lambda_{ns}\cdot \pi_{ns}(x).
\end{equation}
At time $0$, the expected return on bond $n$ given that borrower $n$ is in rating grade $S_n=s>0$ at horizon $T$ is given by
$$
\lambda_{ns}=\mathbb{E}[R_n|S_n=s]=\mathbb{E}[P_{nT}(s)]/P_{n0}.
$$
Note that since we assume $\mathbb{E}_T[\LGD_n]=\ELGD_n$, the expected value of bond $n$ conditional on state $S_n=s$ equals $\mathbb{E}[P_{nT}(s)]=P_{nT}(s).$
To calculate the latter, we use the relationship $p^*_s(T,t)=p^*_s(0,t-T)$ for $t\geq T,$ which follows from the Markovian transition model (see \cite{HullWhite2000}). Thus, we obtain
\begin{equation}\label{eq_defn_lambda_ns}
\begin{aligned} 
\lambda_{ns}&=\frac{1}{P_{n0}} \Big( c_n\delta \sum_{k=1}^m \one_{\{t_k< T\}} e^{r(T-t_k)}+F(T)\\[4mm]
& \hspace{5mm} -\sum_{k=1}^m \one_{\{t_k>T\}}e^{-r(t_k-T)} \left(p^*_s(0,t_k-T)-p^*_s(0,t_{k-1}-T)\right)\cdot  \ELGD_n \cdot F(t_k)\Big).
\end{aligned}
\end{equation} 
Similarly, the expected return in the default state is given by
$$
\lambda_{n0}=\frac{1}{P_{n0}} \left(c_n\delta\sum_{k=1}^m \one_{\{t_k<T\}} e^{r(T-t_k)}+(1-\ELGD_n)F(T)\right),
$$
assuming that default occurs just prior to the time $T$.
Plugging the expressions for $\lambda_{ns}$ and $\lambda_{n0}$ into equation (\ref{equ expected return given x}) and replacing the risk factor by $\alpha_{1-q}(X)$, we obtain an explicit representation of the term $\mu_n(\alpha_{1-q}(X))$ in equation (\ref{equ GA exact MtM}). The second term needs to be computed by MC simulation. We provide an efficient IS approach for this task in Section \ref{sec IS CM}. \cite{GordyMarrone2012} derive an analytical approximation of the expression (\ref{equ GA exact MtM}), which we briefly review in \ref{app MtM GA}.


\section{Importance Sampling for Portfolio VaR Calculations}\label{sec IS}

To obtain stable results for the probability that the portfolio loss rate $L$ exceeds a certain (high) threshold $y$ or for the VaR $\alpha_q(L)$ at a high level $q\in(0,1)$ using standard MC methods requires a very large number of simulations. IS can significantly reduce the computational burden by shifting the distribution so that the rare event that we want to simulate is more likely under the new distribution than under the original density, see also, e.g. \cite{glasserman1999asymptotically}, \cite{glasserman2005importance}, \cite{glynn1989importance} and \cite[Section 11.4]{mcneil2015quantitative}. The main task consists in finding the IS density which can be achieved by \emph{exponential tilting}. 
While the estimation of rare event probabilities via IS is well studied in the literature for the normal copula model, the application to standard industry models of portfolio credit risk is less well document, in particular for VaR calculations. In this section, we fill this gap and provide explicit algorithms for VaR calculations via IS in two widely applied industry models, namely the CreditRisk$^+$ model and the ratings-based CreditMetrics model.

\subsection{VaR Calculations in CreditRisk$^+$ Model}\label{sec IS CR+}

We consider the CreditRisk$^+$ model, described in Section~\ref{sec actuarual}, where the systematic risk factor $X$ follows a  $\operatorname{Gamma}(\xi,1/\xi)$ distribution for some $\xi >0$ with a density function denoted by
$f_X(x)=\xi^\xi x^{\xi-1} e^{-x\xi}/\Gamma(\xi).$
Denote by $M_X$ the moment generating function (mgf) of $X$. Exponential tilting with $t$ then defines a new density
$$
g_t(x)=e^{tx}f_X(x)/M_X(t)=e^{tx}\frac{\xi^\xi}{\Gamma(\xi)} x^{\xi-1} e^{-x\xi}(1-t/\xi)^\xi
$$
which is the density of a Gamma$(\xi,\frac{1}{\xi -t})$ distribution for $t<\xi$. Thus, the change of distribution is defined through the likelihood ratio
$$
r_t(x)=\frac{f_X(x)}{g_t(x)}=e^{-tx}M_X(t)=e^{-tx-\xi\log(1-t/\xi)}.
$$
This first step changes the distribution of the systematic risk factor $X$.

In the second step, consider the portfolio loss ratio $L=\sum_{n=1}^N  a_n \LGD_n Y_n$ as defined in \eqref{eq_defn_L_actuarial} with $Y_n$ being Bernoulli distributed conditional on the risk factor $X$ with parameter $\pi_n(X)=\PD_n (1+\omega_n(X-1)).$
Exponential tilting of $a_n \LGD_nY_n$ with some $\tau\in\mathbb{R}$ defines a change of distribution through the likelihood ratio (conditional on $X=x$)
\begin{align*}
\tilde{r}_{\tau}(a_n \LGD_nY_n;x)&=e^{-\tau a_n \LGD_n Y_n} M_{a_n \LGD_nY_n}(\tau)\\
&=e^{-\tau a_n \LGD_n Y_n} \left(e^{\tau a_n \LGD_n} \pi_n(x)+1-\pi_n(x)\right).
\end{align*}
Since $Y_n$ are conditionally independent given $X=x$, we obtain the likelihood ratio for $L$ conditional on $X=x$ as
$$
\tilde{r}_{\tau}(L;x)=\exp(-\tau L) M_L(\tau)=\exp(-\tau L) \prod_{n=1}^N \left(e^{\tau a_n \LGD_n}\pi_n(x)+1-\pi_n(x)\right).
$$

The likelihood of the two step change of distribution is then the product of the individual likelihood ratios, i.e.
\begin{align}\label{equ likelihood ratio bar r}
\bar{r}_{\tau,t}(L;x)&=\tilde{r}_{\tau}(L;x)\cdot r_t(x)\\
&=e^{-\tau L} \prod_{n=1}^N \left(e^{\tau a_n \LGD_n}\pi_n(x)+1-\pi_n(x)\right) e^{-tx-\xi\log(1-t/\xi)}.\nonumber
\end{align}
Next, we observe that for small probabilities $\pi_n(x)$ we can approximate the Bernoulli distribution by a Poisson distribution so that
\begin{align}
M_{a_n \LGD_nY_n}^{Bern}(\tau)&=\left(e^{\tau  a_n \LGD_n} \pi_n(x)+1-\pi_n(x)\right)\\
&=\exp\left(\log\left( 1+\pi_n(x)\left(e^{\tau a_n \LGD_n}-1\right)\right)\right)\nonumber\\
&\approx \exp\left(\pi_n(x) \left(e^{\tau a_n \LGD_n}-1\right)\right)=M_{a_n \LGD_nY_n}^{Pois}(\tau).\nonumber
\end{align}
Hence, if we set
\begin{equation}\label{equ choosing t}
t:=\sum_{n=1}^N \PD_n \omega_n(e^{\tau a_n \LGD_n}-1),
\end{equation}
then for small probabilities $\pi_n(x)$ the dependence of $\bar{r}_{\tau,t}(L;x)$ on the systematic risk factor is  eliminated (up to the Poisson approximation error). This motivates our choice of $t$, even though we decided to not apply the Poisson approximation directly, mainly since high $\PD$ values may then lead to a large approximation error.
Note that after exponential tilting the risk-factor follows a Gamma$(\xi,\frac{1}{\xi -t})$ distribution which has a mean of $\frac{\xi}{\xi-t}$. We want to achieve that the mean of the risk factor coincides with the quantile of $X$, i.e., $\frac{\xi}{\xi-t} = \alpha_q(X)$, which yields
\begin{equation}\label{equ tau_y}
\sum_{n=1}^N \PD_n \omega_n(e^{\tau a_n \LGD_n}-1) = \xi \left(1-\frac{1}{\alpha_q(X)}\right).
\end{equation}
This can be solved numerically for $\tau_y.$ After exponential tilting with $\tau_y$, the default indicator $Y_n$ is then Bernoulli distributed with parameter
$$
\frac{e^{\tau_y a_n \LGD_n}}{M_{a_n\LGD_n Y_n}^{Bern}(\tau_y)} \pi_n(X).
$$
To determine the VaR, we follow the approach outlined in \cite{glynn1996importance}, and implement the IS procedure in Algorithm~\ref{algo IS CR+} in \ref{app algorithms}.

%
%
%
%
%

\subsection{VaR Calculations in Ratings-Based CreditMetrics Model}\label{sec IS CM}
We consider the CreditMetrics framework discussed in Section~\ref{subsec MtM}. First, we consider the borrower specific loss rate
$$
L_n= -a_n \sum_{s=1}^S \frac{P_{nT}(s)}{P_{n0}} \underbrace{\one_{\{C_{g(n),s-1}<Y_n\leq C_{g(n),s}\}}}_{\equiv Z_n(s)}
$$
for $n=1,\ldots,N$ with $Z_n(s)\in\{0,1\}$ and threshold values $C_{g(n),s}$ as in (\ref{equ thresholds C_gs}). The loss rates are conditionally independent given the systematic risk factor $X$ which is assumed to be standard normally distributed. Define the vector $Z_n=(Z_n(1),\ldots, Z_n(S))\in \{0,1\}^S$ with only one $Z_n(s)=1$ for each $n$ and all other components equal to zero. We then have
$$
\mathbb{P}(Z_n=z| X=x)=\prod_{s=1}^S (\pi_{ns}(x))^{z(s)},
$$
for $z=\left(z(1),\dots,z(S)\right)\in\{0,1\}^S$, where the conditional probability $\pi_{ns}(x)$ that obligor $n$ is in state $S_n=s$ at time $T$ given $X=x$ is defined as in (\ref{eq_pi_def}).
Thus, for $L=\sum_{n=1}^N L_n=-\sum_{n=1}^N a_n \sum_{s=1}^S \frac{P_{nT}(s)}{P_{n0}} Z_n(s)$ the conditional mgf is
$$
M_{L|X}(t)=\mathbb{E}\left[e^{tL}|X\right]=\prod_{n=1}^N \mathbb{E}\left[e^{-ta_n\sum_{s=1}^S \frac{P_{nT}(s)}{P_{n0}}Z_n(s)}|X\right]=\prod_{n=1}^N \sum_{s=1}^S e^{-ta_n \frac{P_{nT}(s)}{P_{n0}}} \pi_{ns}(X).
$$
For $z_n =\left(z_n(1),\dots,z_n(S)\right)\in\{0,1\}^S$, consider the tilted distribution (conditional on $X$)
\begin{align}
    Q_t(\{z\}|X)&=\mathbb{E}\left[\frac{e^{tL}}{M_{L|X}(t)};\{(Z_1,\ldots,Z_N)=(z_1,\ldots,z_N)\}|X\right]\nonumber\\
    & =\frac{e^{-t\sum_{n=1}^N a_n \sum_{s=1}^S \frac{P_{nT}(s)}{P_{n0}} z_n(s)}}{M_{L|X}(t)} \cdot \mathbb{P}\left((Z_1,\ldots,Z_N)=(z_1,\ldots,z_N)|X\right)\nonumber\\
    &= \prod_{n=1}^N\prod_{s=1}^S \frac{e^{-t a_n \frac{P_{nT}(s)}{P_{n0}} z_n(s)}}{\sum_{s=1}^S e^{-ta_n\frac{P_{nT}(s)}{P_{n0}}}\pi_{ns}(X)} \cdot (\pi_{ns}(X))^{z_n(s)}.
\end{align}
Following the argument outlined in \cite[Section 11.4]{mcneil2015quantitative} to derive parameters for exponential tilting we set
\begin{equation}\label{eq_def_c}
c:= \mathbb{E}[L~|~\alpha_{1-q}(X)]=-\sum_{n=1}^N a_n \mu_n(\alpha_{1-q}(X)),
\end{equation}
and  our objective is to reduce the variance of our estimator which means we want to make 
\begin{align}
\mathbb{E}_g\left[\one_{\{L > c\}}r_t^2(L;X)~\middle|~X\right]&=\mathbb{E}\left[\one_{\{L > c\}}r_t(L;X)~\middle|~X\right]\nonumber\\
&=\mathbb{E}\left[\one_{\{L > c\}}M_{L|X}(t)e^{-tL}~\middle|~X\right]\nonumber
\end{align}
as small as possible. Here, $\mathbb{E}_g[\cdot]$ denotes the expectation w.r.t. the IS density $g_t$ whereas $\mathbb{E}[\cdot]$ is the expectation w.r.t. the original density $f$. The smaller the variance of the estimator is, the more likely the event $\{L>c\}$ is under the IS density $g_t$ than under the density $f$. Moreover, we note that 
\[
\mathbb{E}\left[\one_{\{L > c\}}M_{L|X}(t) e^{-tL}~\middle|~X\right] \leq M_{L|X}(t) e^{-tc}
\]
and hence we aim at finding the value $t$ that minimizes the upper bound $M_{L|X}e^{-tc}$. Equivalently, we aim at
minimizing $\frac{d}{dt} \ln (M_{L|X}(t))-tc$ with respect to the variable $t$. This means we solve
\begin{align}
c&=\frac{d}{dt} \ln(M_{L|X}(t))=\frac{d}{dt}\left(\sum_{n=1}^N \ln\left(\sum_{s=1}^S e^{-ta_n \frac{P_{nT}(s)}{P_{n0}}}\pi_{ns}(X)\right)\right)\nonumber\\
&= - \sum_{n=1}^N \sum_{s=1}^S \frac{a_n \frac{P_{nT}(s)}{P_{n0}} e^{-ta_n \frac{P_{nT}(s)}{P_{n0}}} \pi_{ns}(X)}{\sum_{s=1}^S e^{-ta_n\frac{P_{nT}(s)}{P_{n0}}}\pi_{ns}(X)}\label{equ t(c,X) in CM model}
\end{align}
for $t$.
Next, note that, after exponential tilting the standard normally distributed risk factor $X$ possesses an $\mathcal{N}(\mu,1)$ distribution. The zero-variance distribution of the estimator for $\mathbb{P}(L>c~|~X)$ possesses a density which is proportional to $x \mapsto\mathbb{P}(L>c~|~X=x)e^{-x^2/2}$, see \cite[Sec. 5]{glasserman2005importance}. Hence, to choose a variance-reducing $\mu$, it is reasonable to consider a normal density with the same mode as the density of the zero-variance distribution, and hence to compute
$\mu$ as a solution to the optimization problem
\begin{equation}\label{eq_opt_dist}
\max_{x} \mathbb{P}(L>c|X=x) e^{-\frac{1}{2} x^2},
\end{equation}
compare also \cite{glasserman1999asymptotically}.
To find a numerical solution of \eqref{eq_opt_dist}, in line with the \emph{constant approximation} approach proposed in \cite[Sec. 5]{glasserman2005importance}, we determine $\mu$ numerically as a solution to 
\begin{equation}\label{mu_x}
\min_{x} \left\{x^2 ~|~ \mathbb{E}[L~|~X=x]>c \right\}=\min_{x} \left\{x^2 ~|~ -\sum_{n=1}^N a_n \mu_n(x)>c \right\}.
\end{equation}
The IS approach is summarized in Algorithm~\ref{algo IS CM} in \ref{app algorithms}. 


\section{Deep Learning Name Concentration Risk}\label{sec deep learning}

Calculating the expressions (\ref{equ GA exact MtM}), resp. (\ref{equ GA exact actuarial}), via MC simulation is computationally expensive even when using IS methods, since we are interested in very high quantiles $q=99.9\%$ or even higher. Therefore, analytic approximations as those discussed in \ref{app GA methodology} have been intensively studied in the literature. 
These approaches perform well for commercial bank portfolios which even for small portfolios contain enough obligors so that the approximation error is sufficiently small. However, when portfolios are very small with less than 100 borrowers, these method can substantially overestimate the exposure to name concentration risk as documented by our numerical results in Section \ref{sec performance evaluation} and the recent study in \cite{LuetkebohmertSesterShen2023}.

Therefore, in this section, we suggest a NN-based based approach to compute the GAs for both the actuarial setting as well as the MtM setting. We train our network specifically for small portfolios using exact GA computations derived via MC simulations with IS. Our results document a very accurate out-of-sample performance of our NN-based GA.
To learn the relation between input portfolios and the associated GA, we use function approximations in terms of NNs.
Following the presentations in \cite{lutkebohmert2022robust} and \cite{goodfellow2016deep}, we defined a (feed-forward) NN with input dimension $d_{\operatorname{in}}\in \N$, output dimension $d_{\operatorname{out}}\in \N$, $l\in \N$ layers, and non-constant activation function $\varphi:\R\rightarrow \R$ as a function of the form 
\begin{align*}
\R^{d_{\operatorname{in}}}&\rightarrow\R^{d_{\operatorname{out}}}\\
x &\mapsto A_l \circ \varphi \circ A_{l-1} \circ  \dots \circ \varphi\circ A_0(x),
\end{align*}
where $(A_i)_{i=0,\dots,l}$ are affine functions $A_i:\R^{h_i} \rightarrow \R^{h_{i+1}}$, and where the activation function is applied component-wise. The number $h_i \in \N$ is called the \emph{number of neurons} of layer~$i$.
We denote the class of all NNs with input dimension $d_{\operatorname{in}}$, output dimension $d_{\operatorname{out}}$, and with an arbitrary number of layers by $\mathfrak{N}^{\varphi}_{d_{\operatorname{in}},d_{\operatorname{out}}}$. One of the crucial properties of NNs justifying its use for function approximation is its universal approximation property, see, e.g.
\cite[Theorem 2]{hornik1991approximation} or \cite{pinkus1999approximation}.
\begin{lemma}[Universal Approximation Theorem]\label{lem_universal}
Let the activation function $\varphi$ be continuous and non-polynomial.
Then, for any compact set $\mathbb{K} \subset \R^{d_{\operatorname{in}}} $ the set $\mathfrak{N}_{d_{\operatorname{in}},{d_{\operatorname{out}}}}^{\varphi}|_{\mathbb{K}}$ is dense in ${C}(\mathbb{K},\R^{d_{\operatorname{out}}})$ with respect to the topology of uniform convergence on $C(\mathbb{K},\R^{d_{\operatorname{out}}})$.\footnote{We denote by $C(X,Y)$ the space of continuous functions $f:X \rightarrow Y$.}
\end{lemma}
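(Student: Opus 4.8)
The plan is to reduce the vector-valued, arbitrary-depth statement to the classical density of single-hidden-layer networks with scalar output, and then to prove that density directly from the non-polynomiality of $\varphi$. First I would reduce the output dimension to $d_{\operatorname{out}}=1$. Given $f=(f_1,\dots,f_{d_{\operatorname{out}}})\in C(\mathbb{K},\R^{d_{\operatorname{out}}})$ and $\varepsilon>0$, if each scalar component $f_j$ can be approximated to within $\varepsilon$ uniformly on $\mathbb{K}$ by a network $N_j\in\mathfrak{N}^{\varphi}_{d_{\operatorname{in}},1}$, then stacking the $N_j$ into a single network whose output layer collects the scalar outputs yields an element of $\mathfrak{N}^{\varphi}_{d_{\operatorname{in}},d_{\operatorname{out}}}$ approximating $f$ to within $\sqrt{d_{\operatorname{out}}}\,\varepsilon$ in Euclidean norm. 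Because the class permits arbitrarily many layers and neurons, it already contains every single-hidden-layer map $x\mapsto\sum_{j=1}^m c_j\,\varphi(w_j\cdot x+b_j)$; hence it suffices to show that the span of ridge functions
$$
\mathcal{R}=\operatorname{span}\{\,x\mapsto\varphi(w\cdot x+b)\ :\ w\in\R^{d_{\operatorname{in}}},\,b\in\R\,\}
$$
is dense in $C(\mathbb{K},\R)$ for the topology of uniform convergence.

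To establish density of $\mathcal{R}$ I would begin with the case $\varphi\in C^\infty$. Here I would differentiate the ridge function $\varphi(\lambda\,(w\cdot x)+b)$ with respect to the scale $\lambda$: the difference quotients $\bigl(\varphi((\lambda+h)(w\cdot x)+b)-\varphi(\lambda(w\cdot x)+b)\bigr)/h$ lie in $\mathcal{R}$ and converge uniformly on the compact set $\mathbb{K}$ to $(w\cdot x)\,\varphi'(\lambda(w\cdot x)+b)$, which therefore lies in the closure $\overline{\mathcal{R}}$. Iterating $k$ times and setting $\lambda=0$ shows $(w\cdot x)^k\,\varphi^{(k)}(b)\in\overline{\mathcal{R}}$ for every $k$. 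Since $\varphi$ is non-polynomial, for each $k$ there is a point $b_k$ with $\varphi^{(k)}(b_k)\neq0$, so each ridge monomial $x\mapsto(w\cdot x)^k$ belongs to $\overline{\mathcal{R}}$. As $k$-th powers of linear forms span all homogeneous polynomials of degree $k$ (by polarization), $\overline{\mathcal{R}}$ contains every polynomial, and the Weierstrass theorem then gives $\overline{\mathcal{R}}=C(\mathbb{K},\R)$.

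Finally I would remove the smoothness assumption by mollification. Convolving $\varphi$ with a smooth compactly supported mollifier $\psi$ yields $\varphi_\psi=\varphi*\psi\in C^\infty$, and $\varphi_\psi$ is again non-polynomial for a suitable choice of $\psi$ (if every mollification of $\varphi$ were polynomial, then $\varphi$ itself would be polynomial). Each $x\mapsto\varphi_\psi(w\cdot x+b)$ is a uniform limit on $\mathbb{K}$ of Riemann sums of translates $x\mapsto\varphi(w\cdot x+b-s)$ and hence lies in $\overline{\mathcal{R}}$; applying the smooth case to $\varphi_\psi$ therefore places all polynomials in $\overline{\mathcal{R}}$, and Weierstrass concludes. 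The main obstacle is precisely this last ingredient together with the parameter-differentiation step: one must make rigorous that non-polynomiality survives mollification and that the extracted monomials genuinely lie in the uniform closure rather than in a merely formal sense, and this is the technical heart of the argument, whereas the reductions to scalar output and to a single hidden layer are routine.
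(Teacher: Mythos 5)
Your proposal is correct and follows exactly the classical Leshno--Lin--Pinkus--Schocken argument; the paper itself gives no proof of this lemma but simply cites Theorem~2 of Hornik (1991) and Pinkus (1999), and your argument is precisely the one found in the latter reference (reduction to a single hidden layer and scalar output, parameter differentiation to extract $(w\cdot x)^k\varphi^{(k)}(b)$, polarization plus Weierstrass, and mollification for non-smooth $\varphi$). The only step you leave as a sketch --- that non-polynomiality survives some mollification, which in the literature is handled by a Baire-category argument bounding the degrees of $\varphi*\psi$ --- is correctly identified by you as the technical heart and is a known true lemma, so there is no gap in substance.
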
 

\subsection{NN-Based GA for Actuarial Approach}

We observe that $GA_q^{\rm exact}(L)$ in (\ref{equ GA exact actuarial}) can be approximated arbitrarily well by NNs whenever we restrict the inputs to a compact set. The inputs consist of the exposure shares $a_n$, the unconditional default probabilities $\operatorname{PD}_n$, the expected loss given defaults $\operatorname{ELGD}_n$, and the sensitivities $\omega_n$ to the systematic risk factor for each obligor $n=1,\ldots,N$. 
Since restricting the inputs $(a_n, \operatorname{PD}_n, \operatorname{ELGD}_n,\omega_n)$ for $n=1,\ldots,N$ to a compact set is natural, the presented approach provides a tractable alternative to other computationally more demanding or less accurate approaches.

\begin{lemma}[Universal Approximation]\label{lem_univ_approx_GA_act}
Assume $\xi >0$ and $\nu>0$. Then, for all $q\in (0,1)$, for all $\varepsilon>0$, ${N} \in \N$, and all compact sets $\mathbb{K} \subset \R^{4{N}}$, there exists a fully connected feedforward neural network $\mathcal{N}\mathcal{N} \in \mathfrak{N}^{\varphi}_{4N,1}$, with activation function $\varphi(x) = \max\{x,0\}$ such that for all $(a, \operatorname{PD}, \operatorname{ELGD},\omega) \in \mathbb{K}$ we have
\begin{equation}\label{eq_eps_expression}
\Bigg|\mathcal{N\mathcal{N}}  \left(a, \operatorname{PD},\operatorname{ELGD},\omega\right)-GA_q^{\rm exact,act}(L) \Bigg|<\varepsilon.
\end{equation}
\end{lemma}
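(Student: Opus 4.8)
The plan is to reduce the statement to a single continuity assertion and then invoke the Universal Approximation Theorem (Lemma~\ref{lem_universal}). Since $\varphi(x)=\max\{x,0\}$ is continuous and non-polynomial, that lemma tells us $\mathfrak{N}^{\varphi}_{4N,1}|_{\mathbb{K}}$ is dense in $C(\mathbb{K},\R)$ for the uniform topology. Hence it suffices to show that the map
\[
G:\mathbb{K}\to\R,\qquad (a,\PD,\ELGD,\omega)\mapsto GA_q^{\rm exact,act}(L)
\]
is \emph{continuous}; density then furnishes a network $\mathcal{NN}$ with $\sup_{\mathbb{K}}|\mathcal{NN}-G|<\varepsilon$, which is precisely \eqref{eq_eps_expression}. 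So the whole content of the lemma is the joint continuity of $G$ in its $4N$ arguments.

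By \eqref{equ GA exact actuarial} the function $G$ splits into two pieces. The subtracted term $\sum_{n=1}^N a_n\ELGD_n\pi_n(\alpha_q(X))$ is harmless: $\alpha_q(X)$ is a constant fixed by $q$ and $\xi$, and $\pi_n(\alpha_q(X))=\PD_n(1+\omega_n(\alpha_q(X)-1))$, so this term is a polynomial in the inputs and thus continuous. The work lies entirely in the VaR term $v(\theta):=\alpha_q\big(\sum_{n=1}^N a_n\LGD_n D_n\big)$ with $\theta=(a,\PD,\ELGD,\omega)$. First I would establish that the distribution function $F(y;\theta):=\mathbb{P}_\theta(\sum_n a_n\LGD_n D_n\le y)$ is jointly continuous in $(y,\theta)$. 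Conditioning on the default pattern $d\in\{0,1\}^N$ gives the mixture
\[
F(y;\theta)=\sum_{d\in\{0,1\}^N}w_d(\theta)\,F_d(y;\theta),\qquad w_d(\theta)=\E_X\!\Big[\textstyle\prod_n\pi_n(X)^{d_n}(1-\pi_n(X))^{1-d_n}\Big],
\]
where $F_d(\cdot;\theta)$ is the law of $\sum_{n:d_n=1}a_n\LGD_n$, a convolution of scaled Beta variables. Each weight $w_d$ is continuous in $\theta$ by dominated convergence (the integrand is continuous in $\theta$ and bounded by $1$), and each $F_d$ is continuous in $(y,\theta)$ since the Beta densities depend continuously on $\ELGD_n$ (with $\nu>0$ fixed) and convolution preserves this.

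Next I would pass from continuity of $F$ to continuity of the quantile $v(\theta)=\inf\{y:F(y^-;\theta)\ge q\}$. The decisive structural fact is that, conditionally on at least one default, $\sum_{n:d_n=1}a_n\LGD_n$ is absolutely continuous with a density that is strictly positive on the interior of $(0,\sum_{n:d_n=1}a_n)$; taking $d=\mathbf{1}$ (which has weight $w_{\mathbf 1}(\theta)=\E_X[\prod_n\pi_n(X)]>0$ under the mild positivity $\PD_n,a_n>0$) shows that $F(\cdot;\theta)$ is strictly increasing on all of $(0,\sum_n a_n)$. Combining this with the joint continuity of $F$ (so that $\theta_k\to\theta$ yields pointwise, hence by Pólya's theorem locally uniform, convergence $F(\cdot;\theta_k)\to F(\cdot;\theta)$), the standard inverse-function argument—if continuous CDFs converge locally uniformly to a limit that is continuous and strictly increasing at its $q$-quantile, the $q$-quantiles converge—gives continuity of $v$, and therefore of $G=v-(\text{polynomial})$, on $\mathbb{K}$.

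The main obstacle is the atom of $L$ at $0$, carrying mass $w_{\mathbf 0}(\theta)=\E_X[\prod_n(1-\pi_n(X))]$ (no obligor defaults). At parameter values where $\mathbb{P}_\theta(L>0)=1-w_{\mathbf 0}(\theta)$ crosses the level $1-q$, the $q$-quantile switches from $0$ to a strictly positive value, and I must rule out a genuine jump there. The point is that as $\mathbb{P}_\theta(L>0)\downarrow 1-q$ from above, the positive quantile is pinned by $\mathbb{P}_\theta(L\ge v(\theta))=1-q$, so $\mathbb{P}_\theta(0<L<v(\theta))\downarrow 0$; because the support of the positive part of $L$ accumulates at $0^+$ (each $a_n\LGD_n$ is supported on $(0,a_n)$), this forces $v(\theta)\downarrow 0$, matching the value $v=0$ on the other side of the threshold, so $v$ is continuous across it as well. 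Checking this boundary behaviour at the atom—together with the strict positivity of the convolved-Beta density at the quantile—is the only delicate step; everything else reduces to dominated convergence and the continuous dependence of the Beta family on its parameters.
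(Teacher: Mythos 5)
Your proof follows essentially the same route as the paper's: both reduce the lemma to joint continuity of $GA_q^{\rm exact,act}$ in the portfolio parameters, establish joint continuity of the loss CDF by conditioning on the default pattern and exploiting $\nu>0$ (non-degenerate Beta LGDs), pass from the CDF to the quantile, and conclude with the universal approximation theorem. If anything, you are more careful than the paper at the two points its Lemma~\ref{lem_cont_distrib} application leaves implicit --- the strict monotonicity of the CDF (which you derive from the positive weight of the all-default pattern, correctly noting that this needs $\PD_n,a_n>0$ on $\mathbb{K}$) and the behaviour of the quantile at the atom of $L$ at zero --- so the proposal is correct and supplies details the paper asserts without argument.
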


\begin{proof}
Let $q\in (0,1)$, $\varepsilon>0$, ${N} \in \N$, and consider some compact set $\mathbb{K} \subset \R^{4{N}}$. 
Let $Y:=(a,\PD,\ELGD,\omega) =(a_n,\PD_n,\ELGD_n,\omega_n)_{n=1,\dots,N} \in \mathbb{K}$. Then, we define the (random) loss depending on $Y$ by 
$
L(Y):=\sum_{n=1}^N a_n \LGD_n  D_n,
$
where $D_n$ and $\LGD_n$ are as defined in Section~\ref{sec actuarual}.
According to the decomposition outlined in \eqref{equ GA exact actuarial}, we have
{\begin{small}
\begin{equation}\label{eq_GA_proof_1}
\begin{aligned}
GA^{\rm exact,act}_q(L(Y))&= \alpha_q\left(L(Y)\right)- \sum_{n=1}^N a_n \ELGD_n \PD_n (1+\omega_n(\alpha_q(X)-1)),
\end{aligned}
\end{equation}
\end{small}}
where $X$ is Gamma-distributed with mean 1 and variance $1/\xi$ for some $\xi>0$.
Hence, the second summand of \eqref{eq_GA_proof_1} depends continuously on $Y$. To see that the first summand $\alpha_q\left(L(Y)\right)$ also depends continuously on $Y$, we aim at applying Lemma~\ref{lem_cont_distrib}. To that end, we denote by $f_X$ the density function of $X$ and compute for $\ell \in \R$:
{\begin{footnotesize}
\begin{align}
&\mathbb{P}\left(L(Y) \leq \ell \right)\notag \\
 = &\sum_{(d_1,\dots,d_N)\in \{0,1\}^N}  \mathbb{P}\left(L(Y) \leq \ell ~|~ D_1=d_1,\dots,D_N=d_N \right) \cdot \mathbb{P}(D_1=d_1,\dots,D_N=d_N) \notag\\
 &= \sum_{(d_1,\dots,d_N)\in \{0,1\}^N}  \bigg( \mathbb{P}\left(L(Y) \leq \ell ~|~ D_1=d_1,\dots,D_N=d_N \right) \notag \\
 &\hspace{3cm}\times \int_{\R} \prod_{n=1}^N \mathbb{P} \left(D_n = d_n~\middle|~X=x\right) f_X(x) dx \bigg) \notag \\
  &= \sum_{(d_1,\dots,d_N)\in \{0,1\}^N}  \bigg(  \mathbb{P}\left(L(Y) \leq \ell ~|~ D_1=d_1,\dots,D_N=d_N \right) \label{eq_L_Y_y_act}\\
 &\hspace{1cm}\times \int_{\R} \prod_{n=1}^N \left(\PD_n(1+\omega_n)(x-1)\right)^{d_n} \left(1-\PD_n(1+\omega_n(x-1)\right)^{1-d_n} f_X(x) dx\bigg). \notag
\end{align}
\end{footnotesize}}

Then, note that, since $\nu>0$ (i.e. $\LGD_i$ is non-constant) we have that $$\mathbb{K}\times \R \ni (Y, \ell) \mapsto \mathbb{P}\left(L(Y) \leq \ell ~|~ D_1=d_1,\dots,D_N=d_n \right) $$ is continuous as the cdf of a scaled sum of independent Beta distributions. Hence,  the representation in \eqref{eq_L_Y_y_act} implies that
 $\mathbb{K}\times \R \ni (Y,\ell) \mapsto \mathbb{P}\left(L(Y) \leq \ell \right)$ is continuous. Moreover, $P(L(Y)\leq \ell)$ is strictly increasing in $\ell$. Next, note that 
$$\overline{\ell}:=\inf_{\ell \in \R}\left\{\mathbb{P}\left(L(Y) \leq \ell \right) =1 \text{ for all } Y \in \mathbb{K} \right\} <\infty,$$
since the loss $L(Y)$ is bounded if we restrict $Y$ to $\mathbb{K}$. Moreover, the loss $L(Y)$ is by definition non-negative. Thus, we have that the quantile $\alpha_q(L(Y)) = \inf \left\{ \ell \in [0, \overline{\ell}]~ \middle|~ \mathbb{P}\left(L(Y) \leq \ell \right) \geq q \right\}$. By Lemma~\ref{lem_cont_distrib}  we obtain then that  $\mathbb{K} \ni Y \mapsto \inf \left\{ \ell \in  [0, \overline{\ell}]~ \middle|~ \mathbb{P}\left(L(Y)\leq \ell \right) \geq q \right\}  =\alpha_q(L(Y)) $ is continuous. In particular, by \eqref{eq_GA_proof_1}, the function $\mathbb{K} \ni  Y \mapsto GA^{\rm exact,act}_q(L(Y))$ is continuous. Thus, we can apply Lemma~\ref{lem_universal} and obtain the existence of a neural network  $\mathcal{N}\mathcal{N} : \R^{4{N}} \rightarrow \R $ such that for  all $Y \in \mathbb{K}$ we have
$
|\mathcal{N\mathcal{N}}  \left(Y\right)-GA_q^{\rm exact,act}(L(Y)) |< \varepsilon.
$
\end{proof}

Motivated by Lemma~\ref{lem_univ_approx_GA_act}, we formulate  Algorithm~\ref{algo_actuarial} in \ref{app algorithms} allowing to train NNs which are, after training, able to accurately approximate GAs for arbitrary inputs. To facilitate learning, we also include the first-order approximation of the GA (see \eqref{equ GA 1st order actuarial}) as an additional input to the NN.\footnote{Note that increasing the input vector $(a,\PD,\ELGD,\omega)$ by an additional input still allows to apply Lemma~\ref{lem_univ_approx_GA_act} since the target value  $GA_q^{\rm exact, act}(L)$ depends continuously on the input. }

\subsubsection{Training}\label{sec training actuarial}

We train, according to Algorithm~\ref{algo_actuarial}, a NN with the following specifications. The variance parameter of the Gamma distributed systematic risk factor $X$ equals $\xi = 0.25$, the number of simulations $N_{\operatorname{Sim}} = 1~000~000$, the number of iterations $N_{\operatorname{Iter}} = 10~000$.
We consider a feedforward NN with $5$ layers and $512$ neurons in each layer and with ReLu activation functions in each layer. An additional dropout layer is added after each hidden layer, where a dropout rate of 20\% is considered. Our choice is based on a hyperparameter study with $5$-fold cross validation. The (unreported) results show that our selection of hyperparameters is among the best performing combinations.

The choice of the distributions from which the input parameters are sampled for the training of the NN depends of course on the concrete application. In Section \ref{sec real data} we apply our NN-based GA to portfolios of MDBs. Therefore, we train the NN using parameter distributions that are representative for a broad range of MDB portfolios. To this end, we build on recent work in \cite{LuetkebohmertSesterShen2023} who construct realistic MDB portfolios based on publicly available data in MDBs' financial statements. The resulting portfolios all have less than 100 borrowers. Therefore, to train our NN, we set $\overline{N}=100$ and sample in each iteration
$
N_{\operatorname{Obligors}} \sim \mathcal{U}(\{10,\dots,100\}).
$
Further, for each $n = 1,\dots,\overline{N}$, we sample 
\begin{align*}
\operatorname{PD}_n \in \{&0.0,0.01, 0.0002,0.0004,0.0006,0.0011,\\
&0.0018,0.004,0.009,0.0146,
 0.0238,0.0759,0.5147\},
 \end{align*}
 which are the default probabilities of the S\&P transition matrix (compare Sec. \ref{sec real data}). We sample these $\operatorname{PD}$s
with respective probabilities $0.00049$, $0.02297$, $0.00881$, $0.02627$, $0.06454$, $0.05865$,
 $0.06928$, $0.03111$, $0.11070$,
 $0.07672$, $0.19922$, $0.10282$, $0.22834,$
which are based on the empirical distribution of obligors in MDB portfolios; compare \cite{LuetkebohmertSesterShen2023}.
The exposure distribution in each MDB portfolio can be approximated by an exponential distribution. Therefore, we sample the exposures in our training portfolios from exponential distributions with different parameters, i.e.,  
\begin{equation}\label{eq_distribution_inputs}
\begin{aligned}
\operatorname{EAD}_n \sim \operatorname{Exp}(\theta) \text{ with } \theta \sim \mathcal{U}([4,30]),~~~a_n:= \operatorname{EAD}_n / \sum_{j=1}^{\overline{N}}
  \operatorname{EAD}_j,
\end{aligned}
\end{equation}
Moreover, we sample the factor weights according to $\omega_n \sim \mathcal{U}([0,1])$. 
Since we do not have any information on the borrower specific LGDs, in each iteration we set $\ELGD=45\%$ or $\ELGD=10\%$ for training the network. The volatility parameter $\nu$ for the LGDs is set to $0.25$ in both cases.
The constant-size $400$-dimensional input vector $(a, \operatorname{PD},\operatorname{ELGD},\omega)$ to the NN is then constructed by \emph{zero-padding}, i.e., all entries $\operatorname{ELGD}_n ,a_n ,\PD_n,\omega_n$ are $0$  for $n = N_{\operatorname{Obligors}}+1,\dots,100$. The implementation code and training dataset are available on GitHub, see \url{https://github.com/juliansester/DL_Concentration_Risk}.

\subsubsection{Accuracy of the Neural Network}

To test the approximation accuracy of the proposed NN approach, we compare the output of the NN-based GA with the first order approximation from \eqref{equ GA 1st order actuarial} as well as with the outcome of MC simulations with $1~000~000$ simulations. To this end, we sample $1000$ input vectors according to the procedure described in Section \ref{sec training actuarial}.
For each of the inputs we compute the GA according to each of the three approaches. Note that in the asymptotic GA, we use the same factor loadings $\omega_n$ for the calculation of the UL capital $\mathcal{K}_n$ as in the two other approaches, i.e., we implicitly set the factor loadings such that the UL capital in the CreditRisk$^+$ setting and the IRB approach are identical. The confidence level is set to $q=0.999$.
Our results are summarized in Table \ref{tbl_results_GA_NN_mtm_0999} (columns 1 and 2) and Figure \ref{fig_exa_approx_actuarial_0999} and indicate a much smaller prediction error for the NN-based GA than for the first order GA approximation with mean error being about 9 times larger for the latter. 
This documents 
that the NN approach clearly outperforms the approximation from \eqref{equ GA 1st order actuarial} in terms of approximation error and shows the importance of carefully selecting the correct method to compute GAs in small portfolios.

\FloatBarrier
\begin{table}[h!]
\begin{center}{
\scalebox{0.75}{
\begin{tabular}{lrrrr}
\toprule
 & $| GA^{\rm NN,act} -GA^{\rm MC,act}|$ & $| GA^{\rm 1st,act} -GA^{\rm MC,act}|$ & $| GA^{\rm NN,MtM} -GA^{\rm MC,MtM}|$ & $| GA^{\rm 1st,MtM} -GA^{\rm MC,MtM}|$\\
\midrule
mean  &                    0.00565 &                          0.04897 &  0.00573 &                          0.01084 \\
std   &                    0.00756 &                          0.06346 &0.00586 &                          0.00917 \\
min   &                    0.00000 &                          0.00020 &0.00003 &                          0.00004 \\
25\%   &                    0.00162 &                          0.01696 & 0.00201 &                          0.00397 \\
50\%   &                    0.00351 &                          0.02814 &0.00416 &                          0.00892 \\
75\%   &                    0.00658 &                          0.05463 & 0.00753 &                          0.01509 \\
max   &                    0.08046 &                          0.82322 & 0.06398 &                          0.08030 \\
\bottomrule
\end{tabular}}

}
\end{center}
\caption{The table shows the approximation error of the NN-based $GA^{\rm NN,act}$ (column~1) and $GA^{\rm NN,MtM}$ (column 3) as well as the first order approximation $GA^{\rm 1st,act}$ (column~2) from \eqref{equ GA 1st order actuarial} and $GA^{\rm 1st,MtM}$ (column 4) from \eqref{eq_GA_formula_1st_order_mtm} when compared to $ GA^{\rm MC, act}$ and $ GA^{\rm MC, MtM}$, resp., computed by MC simulation with $10^6$ simulations of the underlying risk factor. We tested the approaches on $1000$ input portfolios that were sampled according to the distributions indicated in Section \ref{sec deep learning}. The confidence level is $q=0.999$.}\label{tbl_results_GA_NN_mtm_0999}
\end{table}
\FloatBarrier

\begin{figure}[h!]
\begin{center}
	\subfigure[First order approximation GA]{
    \includegraphics[width=0.45\textwidth]{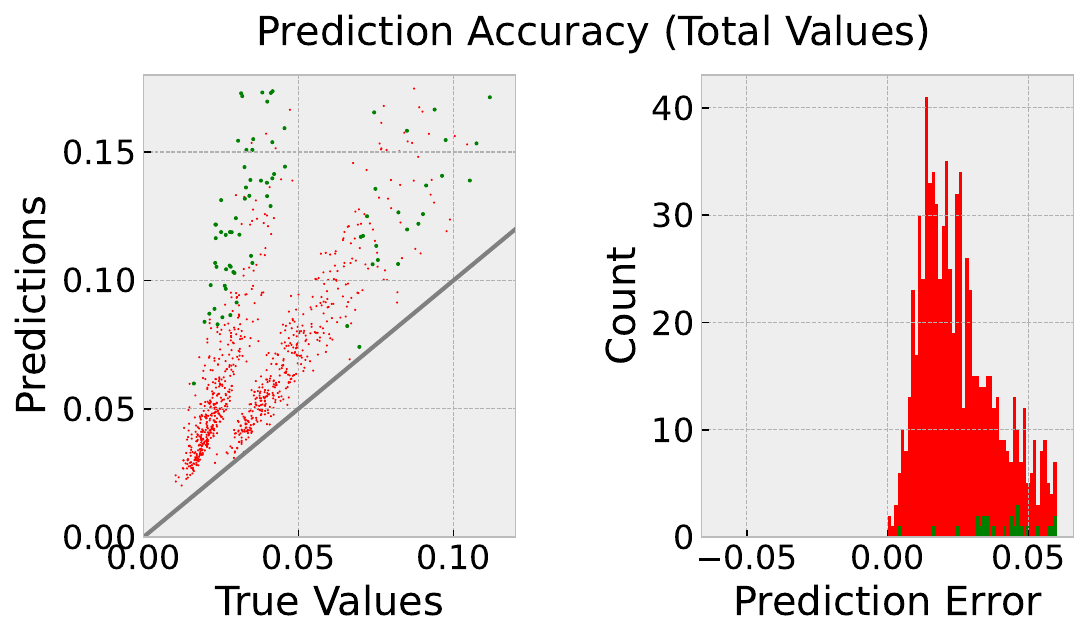} \label{actuarial_approx}}
    	\subfigure[NN-based GA]{
    \includegraphics[width=0.45\textwidth]{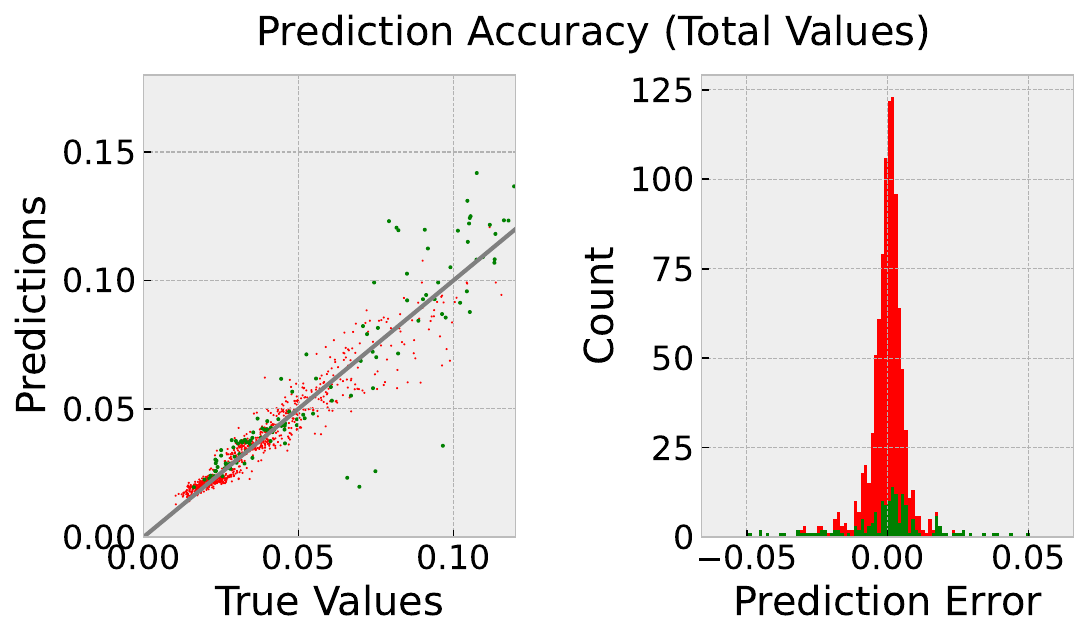} \label{actuarial_NN}}
              \caption{Actuarial approach: The figure illustrates the accuracy of the predictions on the test set of 1000 sampled portfolios. The
left panels of both subfigures (a) and (b) show a plot of all target values (x-values) generated by MC simulation and its predictions (y-values) either according to the first-order GA approximation from \eqref{equ GA 1st order actuarial} (a) or according to the trained NN (b). The right panels depict histograms of the corresponding prediction error, i.e., the error between
target values and predicted values. The green dots correspond to portfolios with less than $25$ borrowers. The confidence level is $q=0.999$.} \label{fig_exa_approx_actuarial_0999}
\end{center}
\end{figure}

\subsection{NN-Based GA for MtM Approach}

When considering the MtM approach, the portfolio is characterized by the exposure shares $a_n$, the expected loss given defaults $\ELGD_n$, the asset correlations $\rho_n$, the coupons $c_n$, the maturities $\tau_n$, and the current rating $g_n$ of each obligor $n=1,\ldots,N$. To compute the GA, we derive the following universal approximation result to approximate the GA from \eqref{equ GA exact MtM} by NNs.

\begin{lemma}[Universal Approximation]\label{lem_univ_approx_GA_mtm}
Let $\mathbb{G} \subset \N$ and $\mathbb{T} \subset \N$  be finite sets.
For all $q\in (0,1)$, for all $\varepsilon>0$, $\nu>0$, number of obligors ${N} \in \N$, number of states $S \in \mathbb{N}$, transition probabilities $(p_{ss'})_{s =1,\dots,S \atop s' = 0,\dots,S}$ and compact sets $\mathbb{K} \subset \R^{4\overline{N}}$, there exists a fully connected feedforward neural network $\mathcal{N}\mathcal{N} \in \mathfrak{N}^{\varphi}_{6N,1}$, with $\varphi(x) = \max\{x,0\}$  such that for all $(a,\operatorname{ELGD},\rho,c,g,\tau) \in \mathbb{K}\times \mathbb{G}^N \times \mathbb{T}^N$ we have
\begin{equation}\label{eq_eps_expression}
\Bigg|\mathcal{N\mathcal{N}}  \left((a,\operatorname{ELGD},\rho,c,g,\tau)\right)-GA_q^{\rm exact, MtM}(L) \Bigg|<\varepsilon.
\end{equation}
\end{lemma}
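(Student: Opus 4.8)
The plan is to mirror the proof of Lemma~\ref{lem_univ_approx_GA_act}: reduce everything to showing that the map from the inputs to $GA_q^{\rm exact,MtM}(L)$ is continuous on the (compact) domain, and then invoke the universal approximation property of Lemma~\ref{lem_universal}. The genuinely new structural feature is that two of the six input blocks, the current ratings $g=(g_n)$ and the maturities $\tau=(\tau_n)$, are discrete, taking values in the finite sets $\mathbb{G}^N$ and $\mathbb{T}^N$. I would therefore first fix an arbitrary pair $(g,\tau)\in\mathbb{G}^N\times\mathbb{T}^N$ and establish continuity of $(a,\ELGD,\rho,c)\mapsto GA_q^{\rm exact,MtM}(L)$ on $\mathbb{K}$. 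Since $\mathbb{G}^N\times\mathbb{T}^N$ is finite, the domain $\mathbb{K}\times\mathbb{G}^N\times\mathbb{T}^N\subset\R^{6N}$ is compact and splits into finitely many slices separated by a positive distance; a function continuous on each slice is then continuous on the whole compact domain, so one application of Lemma~\ref{lem_universal} at the end produces a single network $\mathcal{N}\mathcal{N}\in\mathfrak{N}^{\varphi}_{6N,1}$ valid on all of $\mathbb{K}\times\mathbb{G}^N\times\mathbb{T}^N$.

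With $(g,\tau)$ fixed, I would decompose $GA_q^{\rm exact,MtM}(L)$ as in \eqref{equ GA exact MtM} into the closed-form term $e^{-rT}\sum_n a_n\mu_n(\alpha_{1-q}(X))$ and the quantile term $e^{-rT}\alpha_q(W)$, where $W=-\sum_n a_n\sum_{s=0}^S \frac{P_{nT}(s)}{P_{n0}}\one_{\{C_{g(n),s-1}<Y_n\le C_{g(n),s}\}}$. Using the explicit formulas \eqref{equ P_n0}, \eqref{equ P_nT} and \eqref{eq_defn_lambda_ns} one checks that, for fixed $g,\tau$, the quantities $P_{n0}$, $P_{nT}(s)$ for $s\ge 1$, and the coefficients $\lambda_{ns}$ and $\pi_{ns}(\alpha_{1-q}(X))$ are continuous in $(\ELGD,\rho,c)$, with $P_{n0}$ positive and hence bounded away from $0$ on the compact $\mathbb{K}$; thus the first term is continuous in the continuous inputs.

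The substance of the argument, exactly as in the actuarial case, is continuity of $\alpha_q(W)$, for which I would compute the cdf by conditioning on the state configuration. Writing $\mathbb{P}(W\le \ell)=\sum_{(s_1,\dots,s_N)\in\{0,\dots,S\}^N}\mathbb{P}(W\le\ell\mid S_1=s_1,\dots,S_N=s_N)\,\mathbb{P}(S_1=s_1,\dots,S_N=s_N)$, the configuration probabilities equal $\int_{\R}\prod_n\pi_{n,s_n}(x)\,\phi(x)\,dx$ and depend continuously on $\rho$ by dominated convergence, since the $\pi_{n,s_n}$ are continuous in $\rho_n$ and bounded by $1$. Conditional on a configuration, $W$ is a deterministic offset plus $\sum_{n:\,s_n=0}\frac{a_nF(T)}{P_{n0}}\LGD_n$; because $\nu>0$, the defaulted obligors contribute independent non-degenerate Beta variables with strictly positive weights, so for every configuration containing at least one default the conditional cdf is jointly continuous in $(a,\ELGD,\rho,c,\ell)$ and strictly increasing on its support, as the cdf of a scaled sum of independent Betas. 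Assembling the finite mixture gives joint continuity together with strict monotonicity on the support, and since $W$ is uniformly bounded over $\mathbb{K}$ I would conclude continuity of $(a,\ELGD,\rho,c)\mapsto\alpha_q(W)$ via Lemma~\ref{lem_cont_distrib}, exactly along the lines of \eqref{eq_L_Y_y_act}. Combining with the first term and ranging over the finitely many $(g,\tau)$ yields continuity of the target on the full compact domain, and Lemma~\ref{lem_universal} finishes the proof.

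The main obstacle I anticipate is the pure-migration configurations, i.e.\ those with $s_n\ge 1$ for all $n$: for these $W$ is deterministic given the configuration, so the law of $W$ carries finitely many atoms whose locations move continuously with the parameters. These atoms break pointwise continuity of $\ell\mapsto\mathbb{P}(W\le\ell)$ and must be handled with care. The clean route is to observe that the absolutely continuous part, coming from the configurations with at least one default, has a density that is continuous in the parameters and strictly positive on the relevant range, so the cdf stays strictly increasing and the quantile map remains continuous across the atoms. This is the one place where the assumption $\nu>0$ and the mechanism behind Lemma~\ref{lem_cont_distrib} are genuinely used, and it is the same subtlety that is already present, for the single no-default configuration, in Lemma~\ref{lem_univ_approx_GA_act}.
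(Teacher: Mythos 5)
Your proposal follows the paper's proof essentially step for step: fix $(g,\tau)$, show continuity of $Y\mapsto GA_q^{\rm exact,MtM}(L(Y,g,\tau))$ on $\mathbb{K}$ by decomposing the cdf of the loss over state configurations and invoking Lemma~\ref{lem_cont_distrib}, then pass to a single network via Lemma~\ref{lem_universal}. The only real deviations are minor and, if anything, improvements: you absorb the discrete inputs by observing that the finite disjoint union of compact slices is itself compact (the paper instead linearly interpolates over $\mathbb{G}\times\mathbb{T}$ before applying the universal approximation theorem — both are valid), and you explicitly identify and patch the atoms produced by pure-migration configurations, a point the paper's proof passes over by summing over $\{1,\dots,S\}^N$ and asserting that the conditional cdf is continuous ``as $\nu>0$,'' which holds only for configurations containing at least one default. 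Your sketched repair (strict monotonicity supplied by the absolutely continuous part, with the quantile map remaining continuous as the atom locations move continuously) is the right idea, though note it requires a mild extension of Lemma~\ref{lem_cont_distrib}, whose proof as written uses joint continuity of the cdf.
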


\begin{proof}
Let $q\in (0,1)$, $\varepsilon>0$, ${N} \in \N$, and consider some compact set $\mathbb{K} \subset \R^{4{N}}$.  We pick some $(g,\tau)\in  \mathbb{G}^N \times \mathbb{T}^N$ and some $Y:=(a,\operatorname{ELGD},\rho,c) \in  \R^{4{N}}$.
Then, we define the loss in dependence of $Y$, $g$, and $\tau$ by\footnote{Note that the loss is defined as $(\mathbb{E}[R]-R)e^{-rT}$. However, as mentioned before equation~\eqref{equ GA exact MtM}, the expected return $\mathbb{E}[R]$ cancels out when computing the GA. Hence, we replace without loss of generality the loss by the negative return $-R$ when computing the GA.  }
\[
L(Y,g,\tau) := -\sum_{n=1}^N a_n \sum_{s=1}^S \frac{P_{nT}(s)}{P_{n0}}\cdot\one_{\left\{\Phi^{-1}\left(\sum_{i=0}^{s-1} p_{g(n),i}\right)< ~Y_n~\leq \Phi^{-1}\left(\sum_{i=0}^{s} p_{g(n),i}\right)\right\}},
\]
where the transition probabilities $p_{g(n),i}$ only depend on the rating $g(n)$ of each obligor $n$.
According to \eqref{equ GA exact MtM} we have
\begin{equation}\label{eq_GA_MtM_approx}
    GA^{\rm exact, MtM}_q(L(Y,g,\tau))= \alpha_q(L(Y,g,\tau))- \E[L(Y,g,\tau)~|~\alpha_{1-q}(X)] ,
\end{equation}
where it can be easily seen that the second summand of \eqref{eq_GA_MtM_approx}
\[
\begin{array}{ccl}
\E[L(Y,g,\tau)~|~\alpha_{1-q}(X)] &=&\ds e^{-rT} \sum_{n=1}^N a_n \cdot \mu_n(\alpha_{1-q}(X))\\[2mm]
&=&\ds  e^{-rT} \sum_{n=1}^N a_n \cdot \sum_{s=0}^S \mathbb{E}[R_n|S_n=s] \pi_{ns}(\alpha_{1-q}(X)) 
\end{array}
\]
depends continuously on $Y=(a,\operatorname{ELGD},\rho,c)$. Here, $R_n$ is defined by (\ref{equ return R_n}) and $\pi_{ns}(X)$ is given by (\ref{eq_pi_def}). We denote by $f_X$ the density of a standard normal distribution and obtain for all $\ell \in \R$ that
\begin{align*}
\mathbb{P}\left(L(Y,g,\tau) \leq l \right) & = \sum_{(s_1,\dots,s_N) \in \{1,\dots,S\}^N} \bigg(\mathbb{P}\left(L(Y,g,\tau) \leq \ell~\middle|~S_1=s_1,\dots,S_N = s_N\right)\\
&\hspace{4cm} \times \mathbb{P}\left(S_1=s_1,\dots,S_N = s_N\right)  \bigg) \\
& = \sum_{(s_1,\dots,s_N) \in \{1,\dots,S\}^N} \bigg( \mathbb{P}\left(L(Y,g,\tau) \leq \ell~\middle|~S_1=s_1,\dots,S_N = s_N\right)\\
&\hspace{4cm} \times \int_{\mathbb{R}} \prod_{n=1}^N \pi_{ns}(x) f_X(x) dx \bigg)
\end{align*}
which depends continuously on $Y$ and $\ell$ as $\nu>0$, and is strictly increasing in $\ell$. Note further that 
$$\overline{\ell}:=\inf_{\ell \in \R}\left\{\mathbb{P}\left(L(Y,g,\tau)  \leq \ell \right) =1 \text{ for all } Y \in \mathbb{K} \right\} <\infty,$$
since the loss $L(Y,g,\tau) $ is bounded if we restrict $Y$ to $\mathbb{K}$. Moreover,  $-L(Y,g,\tau) $ is by definition non-negative. Thus, we have 
\begin{align}
\alpha_q(L(Y,g,\tau) )&=-\alpha_{1-q}(-L(Y,g,\tau) )\nonumber  \\
&= -\inf \left\{ \ell \in [0, \overline{\ell}]~ \middle|~ \mathbb{P}\left(-L(Y,g,\tau) \leq \ell \right) \geq 1-q \right\}\nonumber
\end{align}
for some $\overline{\ell} >0$. Due to Lemma~\ref{lem_cont_distrib}, we then obtain that  $\mathbb{K} \ni Y \mapsto \alpha_q(L(Y,g,\tau) ) $ is continuous and thus, by \eqref{eq_GA_MtM_approx}, also $\mathbb{K} \ni  Y \mapsto GA^{\rm exact, MtM}_q(L(Y,g,\tau) )$ is continuous. 
Applying linear interpolation on the grid $ \mathbb{G}\times \mathbb{T}$, we may then conclude the existence of a continuous function 
$\R^{4N} \times \R \times \R \ni (Y,g,\tau) \mapsto \Psi(Y,g,\tau)$ such that $\Psi(Y,g,\tau)|_{\mathbb{K}\times \mathbb{G}\times \mathbb{T}} = GA^{\rm exact, MtM}_q(L(Y,g,\tau) ).$

Eventually, we apply Lemma~\ref{lem_universal} to $\Psi$ and obtain the existence of a neural network  $\mathcal{N}\mathcal{N} : \R^{6{N}} \rightarrow \R $ such that for  all $(Y,g,\tau) \in \mathbb{K}\times \mathbb{G}\times \mathbb{T}$ we have 
\begin{equation}\nonumber
\Bigg|\mathcal{N}\mathcal{N} \left(Y,g,\tau\right)-\Psi(L(Y,g,\tau) ) \Bigg|=\Bigg|\mathcal{N}\mathcal{N} \left(Y,g,\tau\right)-GA^{\rm exact, MtM}_q(L(Y,g,\tau) )  \Bigg|< \varepsilon.
\end{equation}

\end{proof}

In the MtM setting, Algorithm~\ref{algo_MtM} in \ref{app algorithms} allows to train NNs which are then able to compute GAs for any input of portfolio parameters. To improve learning and in line with the approach for the actuarial case, we include the first order approximation $GA^{\rm 1st, MtM}$ in \eqref{eq_GA_formula_1st_order_mtm} as an additional input to the NN.\footnote{Note that Lemma~\ref{lem_univ_approx_GA_act} can still be applied after increasing the input vector $(a,\operatorname{ELGD},\rho,c,g,\tau)$ by this additional input since the target value $GA_q^{\rm exact, MtM}(L)$ depends continuously on the input. }

\subsubsection{Training}\label{sec training mtm}

For training the NN, according to Algorithm~\ref{algo_MtM}, we consider similar specifications as for the actuarial case, i.e. the number of simulations $N_{\operatorname{Sim}} = 1~000~000$, the number of iterations $N_{\operatorname{Iter}} = 10~000$. We consider a feedforward NN with $5$ layers and $512$ neurons in each layer, where a ReLu activation function is applied to  each layer, and no dropout or regularization. Our choice is based on a hyperparameter study with $5$-fold cross validation. The (unreported) results show that our selection of hyperparameters is among the best performing combinations.

To train the NN, we set $\overline{N}=100$ and we sample in each iteration
$
N_{\operatorname{Obligors}} \sim \mathcal{U}(\{10,\dots,100\}).
$
Similarly to Section \ref{sec training actuarial}, for $n = 1,\dots,\overline{N}$ we sample the exposures from an exponential distribution. Further parameters are sample according to uniform distributions, i.e.,
\begin{equation}\label{eq_distribution_inputs_mtm}
\begin{aligned}
\operatorname{EAD}_n &\sim \operatorname{Exp}(\theta) \text{ with } \theta \sim \mathcal{U}([4,30]),~~a_n:= \operatorname{EAD}_n / \sum_{j=1}^{\overline{N}}
  \operatorname{EAD}_j, \\
&\rho_n \sim \mathcal{U}([0.15,0.7]),~~c_n \sim   \mathcal{U}([0,0.1]),
~~\D_n \sim  \mathcal{U}([1,10]),
\end{aligned}
\end{equation}
The range of the asset correlation parameters stems from the empirical study in \cite{LuetkebohmertSesterShen2023}. Since there is no publicly available information on coupon rates, we sample these from a rather wide range of values. According to \cite{LuetkebohmertSesterShen2023} maturities of sovereign loans range between a few months and up to 35 years with the majority of loans maturing within the next 10 years, and are mostly uniformly distributed over this time period. Therefore, to facilitate training, we assume maturities are uniformly distributed between one and ten years.
We sample the initial ratings $g_n$ (analogous to the PDs in the actuarial case) from the probabilities implied by the S\&P transition matrix (compare Sec. \ref{sec real data}) and the overall empirical distribution of obligors in MDB portfolios described in \cite{LuetkebohmertSesterShen2023}. The rating transition matrix is also used for training the NN approach in the MtM setting. Thus, we consider 17 non-default states and an absorbing default state.
Similarly as before, we set $\ELGD=45\%$ or $\ELGD=10\%$ and the volatility parameter $\nu= 0.25$ for training the network.
The constant-size $600$-dimensional input vector $(a,\operatorname{ELGD},\rho,c,g,D)$ to the NN is then constructed by \emph{zero-padding}, i.e., all entries $a_n,\operatorname{ELGD}_n, \rho_n,c_n, g_n, D_n$ are $0$  for $n = N_{\operatorname{Obligors}}+1,\dots,100$. The implementation code and training dataset are available on GitHub, see \url{https://github.com/juliansester/DL_Concentration_Risk}.


\subsubsection{Accuracy of the Neural Network}

We compare the output of the NN-based GA and the first order approximation from \eqref{eq_GA_formula_1st_order_mtm} with GA computations based on MC simulations with $1~000~000$ simulations. To this end, we consider $1000$ different input vectors, which are sampled according to the procedure outlined in Section \ref{sec training mtm}.
For each of the inputs, we compute the GA according to each of the three approaches for the confidence level $q=0.999$. Results are summarized in Table~\ref{tbl_results_GA_NN_mtm_0999} (columns 3 and 4) and Figure~\ref{fig_exa_approx_mtm_0999}. As can be observed, the prediction error for the first order GA approximation from \eqref{eq_GA_formula_1st_order_mtm} is on average about twice as large as for the NN GA. This documents that the NN GA significantly outperforms the analytic GA with respect to the approximation error. 
Overall, the results show that even for high quantiles as $q=0.999$ the NN can be efficiently trained to accurately predict the GA for small and concentrated portfolios.

\begin{figure}[h]
\begin{center}
	\subfigure[First order approximation GA]{
    \includegraphics[width=0.45\textwidth]{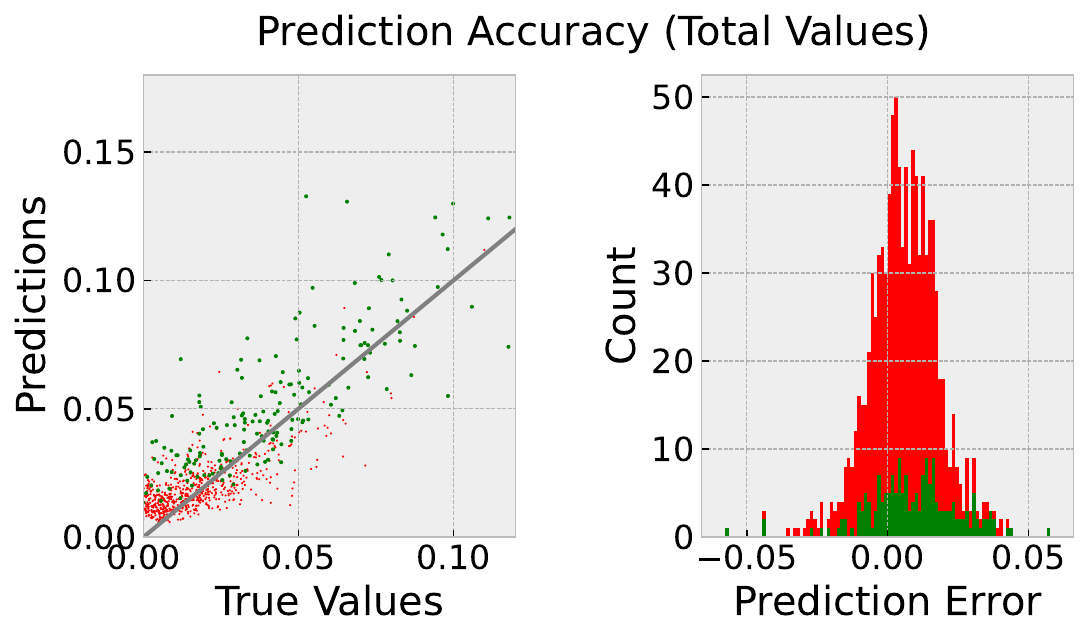} \label{actuarial_approx}}
    	\subfigure[NN-based GA]{
    \includegraphics[width=0.45\textwidth]{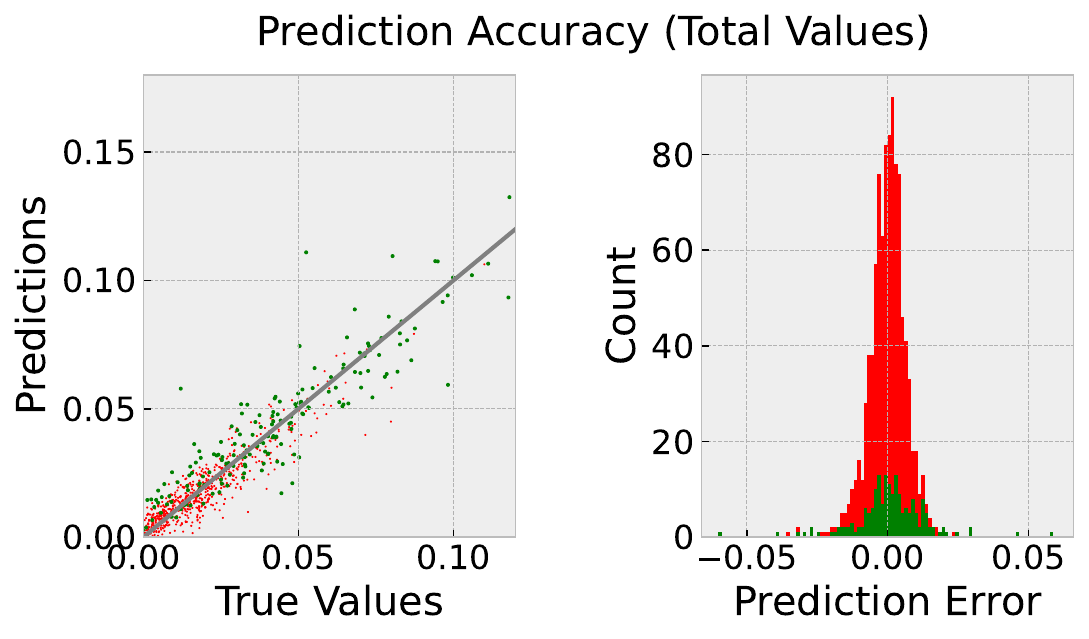} \label{actuarial_NN}}
              \caption{MtM approach: The figure illustrates the accuracy of the predictions on the test set of 1000 sampled portfolios. The
left panels of both subfigure (a) and (b) show a plot of all target values (x-values) generated by MC simulation and its predictions (y-values) either according to the first order GA approximation from \eqref{eq_GA_formula_1st_order_mtm} (a) or according to the trained NN (b). The right panels depict histograms of the corresponding prediction error, i.e., the error between
target values and predicted values. The green dots correspond to portfolios with less than $25$ borrowers. The confidence level is $q=0.999$.} \label{fig_exa_approx_mtm_0999}
\end{center}
\end{figure}


\section{Performance Evaluation of the NN Approach}\label{sec performance evaluation}

\subsection{Sensitivity Analysis based on Simulated Portfolios}\label{sec sensitivity analysis}

In this section, we study the impact of different input parameters on the size of the GA. 
To this end, we sample a portfolio with $100$ obligors with input parameters distributed as described in Sections \ref{sec training actuarial} and \ref{sec training mtm} for the actuarial approach and the MtM approach, respectively. 
First we investigate the effect of reducing the number of obligors by gradually deleting obligors from the originally sampled portfolio, depicted in Figure~\ref{fig_sensitivities_actuarial}~(a) for the actuarial and in Figure~\ref{fig_sensitivities_mtm}~(a) for the MtM approach. We observe that with an increasing number of obligors the GA tends to decrease, while this relation is not necessarily monotone since adding single obligors with large exposure or high PDs might also increase the GA as can be clearly observed in the MtM case. Figure~\ref{fig_sensitivities_actuarial}~(a) further shows  that the approximate GA in the actuarial approach overestimates the true GA (approximated by the NN) significantly which is also in line with the results in Figure \ref{fig_exa_approx_actuarial_0999}. In the MtM approach, it highly depends on the choice of the portfolio whether the approx. GA over- or underestimates the true GA with overestimation occurring more frequently than underestimations (compare Figure~\ref{fig_sensitivities_mtm}~(a) and also Figure \ref{fig_exa_approx_mtm_0999}).

Next, we study how changes in the credit quality or the relative weights of individual loans in the portfolio affect the GA. Therefore, to consider an average-sized portfolio, we construct a portfolio consisting of only $50$ obligors sampled according to our input distributions for the actuarial and the MtM case. We depict the composition of these portfolios ordered by an increasing exposure share together with the assigned $\PD$ in Figures~\ref{fig_sensitivities_actuarial}~(b) and ~\ref{fig_sensitivities_mtm}~(b).
To study the effect of a change in the credit  quality of an individual borrower on the name concentration risk of the credit portfolio, we consider a single notch downgrade for each of the obligors in the respective portfolios. It can be observed that the effect on the GA is the more pronounced the larger the exposure of the respective obligor is, i.e., if a borrower with a large relative weight gets downgraded, this increases the name concentration of the portfolio and therefore leads to a rising GA. The opposite effect occurs when the rating of the borrower improves (compare Figures~\ref{fig_sensitivities_actuarial}~(c) and ~\ref{fig_sensitivities_mtm}~(c)).\footnote{This is especially interesting in light of the remark in \cite{UbertiFigini2010} that differences in borrower specific default probabilities are more relevant for small portfolios than for larger and more diversified portfolios.}


\begin{figure}[h]
\begin{center}
	\subfigure[Sensitivity w.r.t.\,number of obligors.]{
       \includegraphics[width=0.43\textwidth]{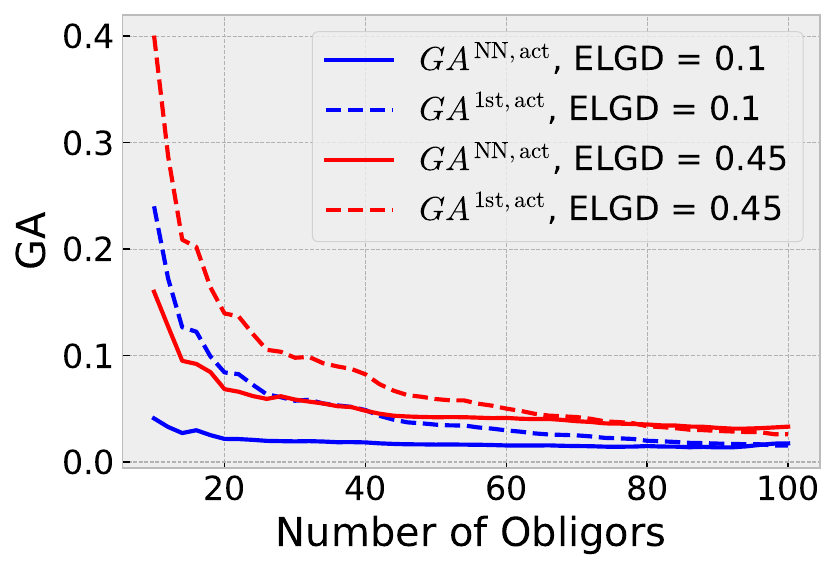} \label{sensitivities_1}}
     	\subfigure[Reduced Portfolio with 50 Obligors]{
    \includegraphics[width=0.49\textwidth]{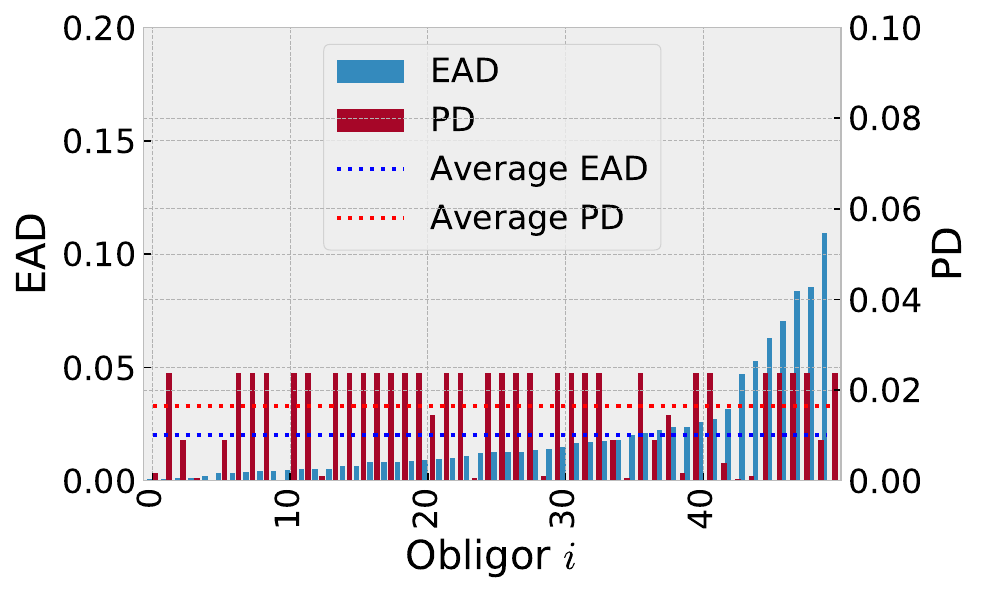}
   \label{sensitivities_2}}
   \subfigure[Sensitivity w.r.t.\,1 notch downgrade]{
       \includegraphics[width=0.43\textwidth]{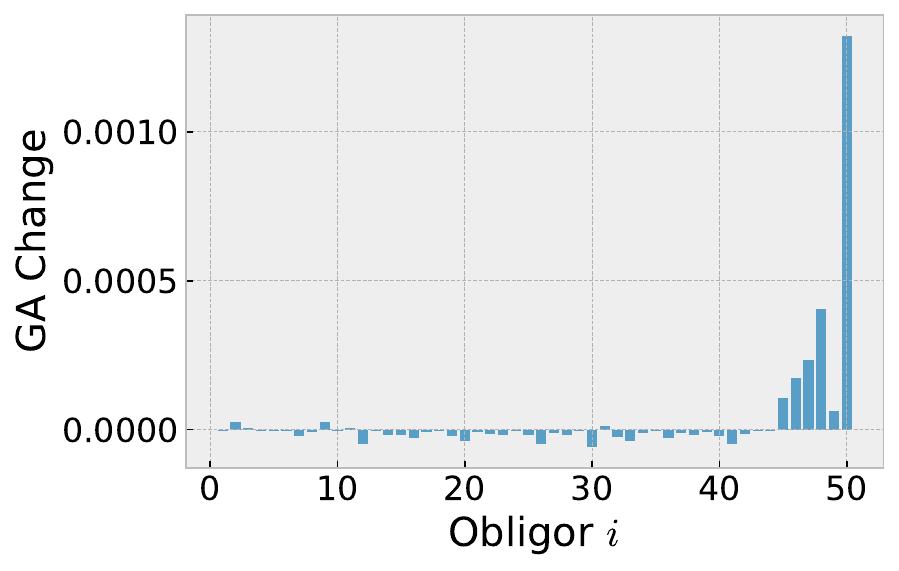} \label{sensitivities_3}}
     	\subfigure[Sensitivity w.r.t.\,1\% change in $a_n$]{
    \includegraphics[width=0.46\textwidth]{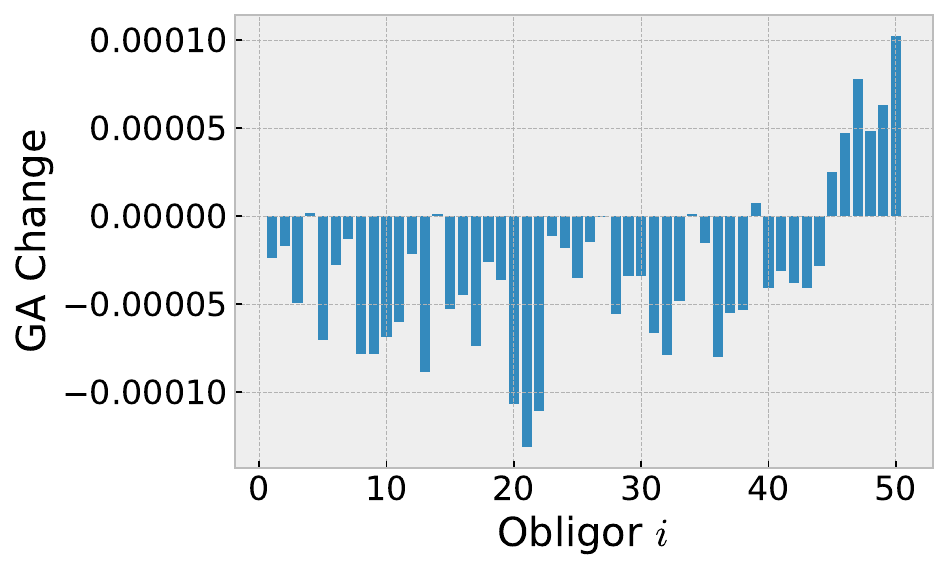}
   \label{sensitivities_4}}
     \caption{The figure shows the sensitivities of the approximate and NN GA in the actuarial approach of a portfolio sampled according to Section \ref{sec training actuarial} w.r.t. the number of obligors (a). In Subfigure (b) we show the composition of a reduced portfolio of 50 obligors, which we consider in (c) 
to study the impact of a single notch downgrade on $GA^{\rm NN,act}$. The impact of an increase in an obligor's exposure share $a_n$ on $GA^{\rm NN,act}$ is depicted in (d).} \label{fig_sensitivities_actuarial}
\end{center}
\end{figure}

Finally, we investigate the influence of increasing the relative portfolio weight (given by $a_i=\operatorname{EAD}_i / \sum_{j=1}^{50}\operatorname{EAD}_j$) of a single obligor by $1\%$.\footnote{After increasing the weight we normalize the portfolio again such that the weights still sum up to $1$.}  The extent to which this change in the portfolio composition affects the GA depends crucially on both the relative portfolio weight of the respective obligor and its PD, whereby the influence of the relative portfolio weight appears to be stronger: If the relative portfolio weight before the increase is rather low, the effect of the $1\%$ increase is usually a reduction in the GA, as the portfolio becomes more homogeneous so that name concentration is reduced. However, if the relative weight is already high, then a further increase in the weight  increases also the name concentration and hence the GA, unless the PD is small. In the latter case, an increase in the relative weight can lead to a better overall quality of the portfolio and thus to a reduction in the GA. Compare for a detailed illustration Figures~\ref{fig_sensitivities_actuarial}~(d) and ~\ref{fig_sensitivities_mtm}~(d).

\begin{figure}[h]
\begin{center}
	\subfigure[Sensitivity w.r.t.\,number of obligors.]{
       \includegraphics[width=0.43\textwidth]{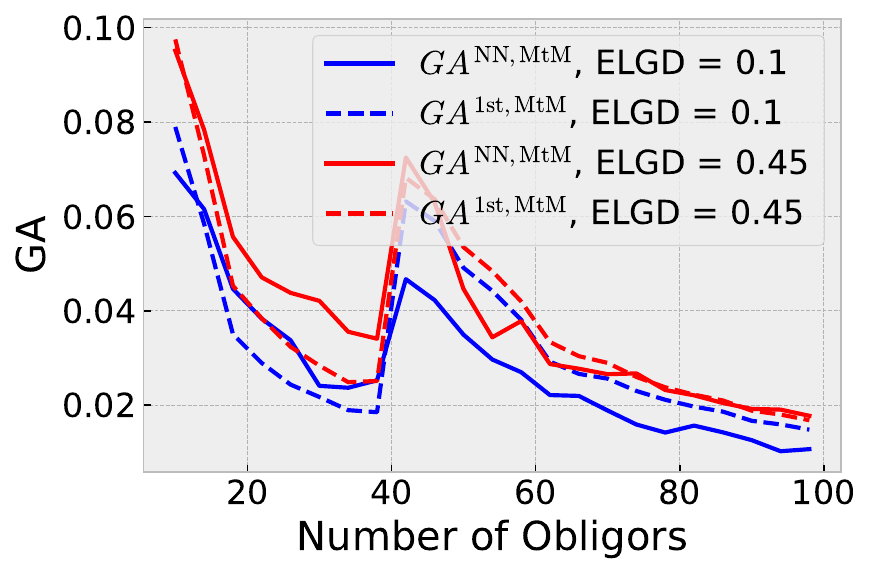} \label{sensitivities_1}}
     	\subfigure[Reduced Portfolio with 50 Obligors]{
    \includegraphics[width=0.49\textwidth]{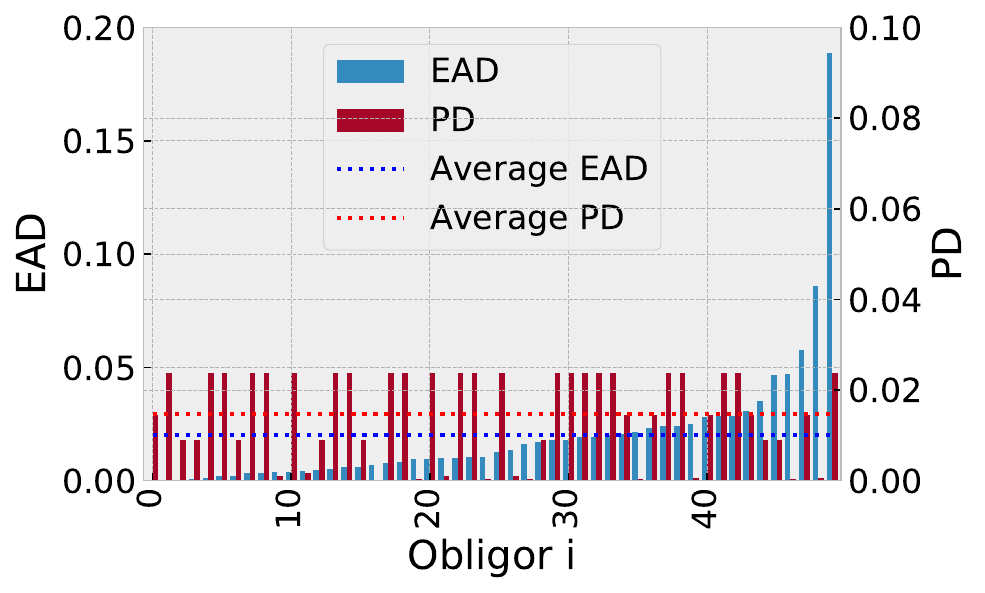}
   \label{sensitivities_2}}
   \subfigure[Sensitivity w.r.t.\,1 notch downgrade]{
       \includegraphics[width=0.43\textwidth]{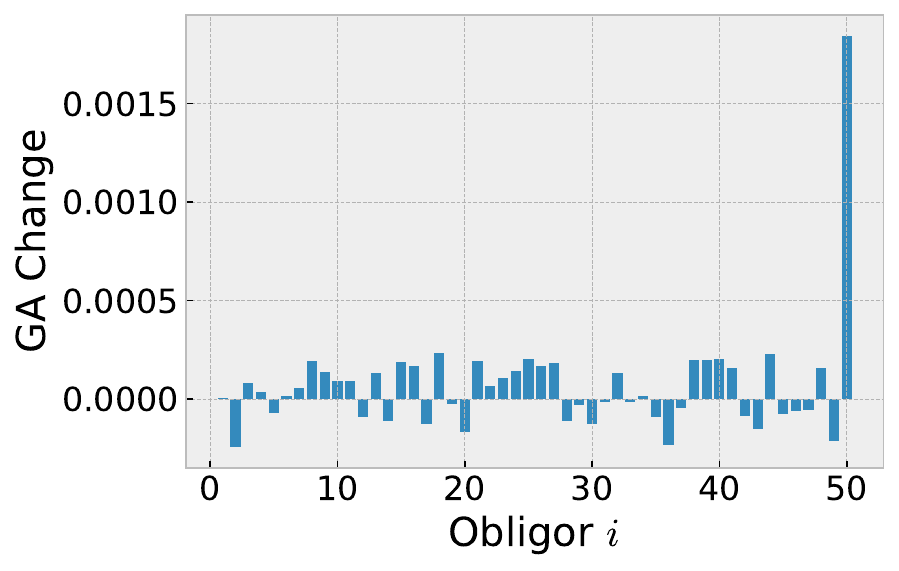} \label{sensitivities_3}}
     	\subfigure[Sensitivity w.r.t.\,1\% change in $a_n$]{
    \includegraphics[width=0.46\textwidth]{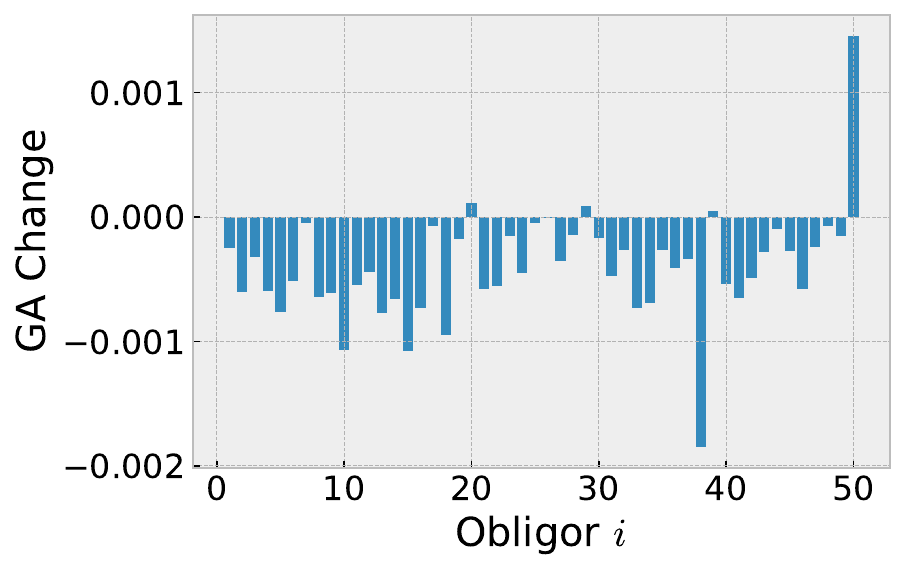}
   \label{sensitivities_4}}
              \caption{The figure shows the sensitivities of the approximate and NN GA in the MtM approach of a portfolio sampled according to Section \ref{sec training mtm} w.r.t. the number of obligors (a). In Subfigure (b) we show the composition of a reduced portfolio of 50 obligors, which we consider in (c) 
to study the impact of a single notch downgrade on $GA^{\rm NN,MtM}$. The impact of an increase in an obligor's exposure share $a_n$ on $GA^{\rm NN,MtM}$ is depicted in (d)} \label{fig_sensitivities_mtm}
\end{center}
\end{figure}

Overall, these results document that the impact of changes in individual borrower's characteristics or exposure shares on the name concentration risk of the loan portfolio is quite complex. The effect of including an additional loan to an existing portfolio is comparable. Thus, quantitative tools as our proposed method are very useful to evaluate these effects. In particular, our NN-based GA allows for a fast and accurate evaluation of such sensitivities whereas comparable calculations based on MC simulations are computationally expensive. In this way, our approach can help, for instance, to examine the impact of a potential future downgrading of a borrower on the name concentration risk of the portfolio, which could be extremely useful for rating agencies, for example. Alternatively, our NN-based GA can be applied to evaluate how adding loans to a portfolio may affect the portfolio's exposure to name concentration risk. This might be very valuable for the investment decisions of institutions if they want to maintain certain risk assessments.

\subsection{Application to Realistic MDB Portfolios}\label{sec real data}

Next, we evaluate the performance of our NN GA based on realistic MDB portfolios as constructed in \cite{LuetkebohmertSesterShen2023} for eleven MDBs from their publicly available financial statements:
{
\begin{itemize}\setlength\itemsep{-0.2cm}
    \item African Development Bank (AfDB); see \cite{AfDB2022}
    \item Asian Development Bank (ADB); see \cite{ADB2022}
    \item Development Bank of Latin America and the Caribbean (CAF); see \cite{CAF2022}
    \item Caribbean Development Bank (CDB); see \cite{CDB2022}
    \item Central American Bank for Economic Integration (CABEI); see \cite{CABEI2022}
    \item East African Development Bank (EADB); see \cite{EADB2022}
    \item European Bank for Reconstruction and Development (EBRD); see \cite{EBRD2022}
    \item Inter-American Development Bank (IDB); see \cite{IDB2022} 
    \item International Bank for Reconstruction and Development (IBRD); see \cite{IBRD2022}
    \item Trade and Development Bank (TDB); see \cite{TDB2022}
    \item West African Development Bank (BOAD); see \cite{BOAD2022}
\end{itemize}}


We consider two different values for $\ELGD$ by either setting $\ELGD=45\%$ or, to account for preferred creditor treatment (PCT), by choosing $\ELGD=10\%$ which is in line with the specification imposed by S\&P for sovereign's borrowing from MDBs with high preferred creditor status. Moreover, we set the parameter $\nu=25\%$ implying a volatility of the LGD variable of $\VLGD=0.25$ without PCT (i.e. when $\ELGD=45\%$) and $\VLGD=0.15$ with PCT (i.e. for $\ELGD=10\%$). This choice is in line with estimates of sovereign haircuts in \cite{CrucesTrebesch2013} when assuming the recovery rate to be beta distributed. \\

Since most loans to sovereign borrowers of various countries are denominated in USD, we use the average one-year foreign currency rating transition matrix estimated for the time period 1975--2021 as published in \cite{S&P2022}.\footnote{See \url{https://www.spglobal.com/ratings/en/research/articles/210412-default-transition-and-recovery-2020-annual-sovereign-default-and-rating-transition-study-11888070} (Table 35).} We follow
\cite{S&P2017,S&P2018}, who impose a minimal rating of B-, and in line with \cite{RiskControl2023} we merge all ratings CCC+ and worse into a consolidated rating `Cs'. 
The resulting normalised transition matrix is reported in Table B.1 in \cite{LuetkebohmertSesterShen2023}.
To convert these historical transition probabilities into risk-neutral transition rates, we follow \cite{Agrawal2004} and \cite{Kealhofer2003} and assume a market Sharpe ratio of $\psi=0.4.$
Risk-neutral probabilities $p^*_s(t,T)$ are then calculated as in the KMV model by
$$
p^*_s(t,T)=\Phi(\Phi^{-1}(p_s(t,T))+\psi \sqrt{T-t} \sqrt{\rho}),
$$
where $\rho$ denotes the asset correlation. 
We determine the borrower-specific asset correlation parameter $\rho_n$ according to the formula in the IRB approach (compare \cite{Basel2011}, p.39) such that $\rho_n$ ranges between 12\% and 24\%.\\

The DL approach in the actuarial setting further requires the factor loading $\omega_n$ for each borrower $n$ as input, which we obtain by setting the UL capital in the IRB approach (compare \cite{Basel2011}, p. 39) equal to the UL capital in the CreditRisk$^+$ model (see equation (\ref{equ UL capital CR+})) with $x_q$ equal to the $q^{\rm th}$ quantile of the Gamma$(\xi,1/\xi)$-distributed risk factor $X$ for $\xi=0.25$ and then solve for $\omega_n$.
Note that the asymptotic GA in the actuarial approach as derived in \cite{GordyLuetkebohmert2013} and implemented under the S\&P approach does not require this parameter as an input because it directly uses the $\mathcal{K}_n$ and $\mathcal{R}_n$ values of the IRB approach.

Detailed information on loan maturities is not publicly available but a rough distribution of exposures in different maturity ranges for the individual MDBs can be inferred from their financial statements and average maturities are reported Table 1 in \cite{LuetkebohmertSesterShen2023}. In our calculations, for simplicity, we set the maturity equal to one year for all loans. 
Further, we impose the assumption that interest is paid semiannually in all contracts and assume that coupon rates are equal to 1\%.

As risk-free interest rates used for discounting future cash flows we construct the Nelson-Siegel-Svensson curve for the US treasury rate based on the parameters published on the website of the Board of Governors of the Federal Reserve System.\footnote{See \url{www.federalreserve.gov/data/yield-curve-tables/feds200628_1.html}} Since we consider financial statements as of December 31, 2022, we construct the yield curve for December 30, 2022 (last business day of the year). 

We calculate the GA at a confidence level of $q=99.9\%$ over a time horizon of $T=1$ year. 
We compute the NN GA and the analytic approximation GA in both the actuarial CreditRisk$^+$ model and the MtM approach and calculate the percentage error with respect to the exact GA obtained by MC simulations with IS.\\

The GAs for the described MDB portfolios are reported in Tables~\ref{tbl_mdb_gas_045} and~\ref{tbl_mdb_gas_01}. Our results for the percentage error between the NN GA and the exact MC GA show that the NN approach is highly accurate for both the actuarial CreditRisk$^+$ model and the MtM CreditMetrics approach. Comparing the results to the percentage error of the approximate GAs documents that our NN GA clearly outperforms the respective analytical approximations. For the CreditRisk$^+$ model, the mean absolute percentage error is 119 for the approximate GA and only 13 for the NN GA in the case of $\ELGD=45\%$. The difference between the percentage errors is even greater (322 compared to 25) in the case of $\ELGD=10\%$. For individual portfolios, the accuracy of the GA calculation improves by a factor of up to 75 for the NN GA compared to the approximate GA for the case $\ELGD=45\%$ and even more (up to a factor of 146) for $\ELGD=10\%$. While the approximate GA is generally more accurate in the MtM setting with an average percentage error of 21, the NN GA still has a much lower mean absolute percentage error of only 12 in the MtM case for $\ELGD=45\%$ (and 36 compared to 18 in the case of $\ELGD=10\%$). For individual portfolios, the NN GA improves accuracy by a factor of up to 7 compared to the approximate GA in the MtM setting.

The NN GA is also highly accurate for the two very small portfolios in the sample which have less than 10 obligors. This is particularly remarkable since the NN was only trained on portfolios with at least 10 borrowers.

The out-of-sample results for the simulated portfolios in Section \ref{sec deep learning} already illustrated the superior performance of our NN GA compared to the analytic methods. 
The out-of-sample results based on real portfolio data in this subsection document that our NN GA is also highly accurate in real data applications, i.e. when applied to portfolios with characteristics that are not just sampled from the training distributions. The accuracy of the NN GA of course increases when the distributions used for training fit more closely to the empirical distributions of the real portfolios. While our NN was trained using parametric distributions with parameters fit to the real portfolios, training based on empirical distributions of historical portfolio data could substantially increase precision. In this way, MDBs or rating agencies, which have access to such historical portfolio data, could use our methodology to obtain a highly accurate and very fast estimate for the name concentration risk in current portfolios.

\begin{table}[h!]
\resizebox{\textwidth}{!}{
\begin{tabular}{@{}clllllllllll@{}}
\toprule
\multicolumn{1}{l}{}                 & CAF    & ADB    & AFDB   & IDB    & CDB    & CABEI  & EADB   & IBRD   & TDB    & BOAD   & EBRD   \\ \midrule
\textbf{Borrowers}                   & 16  & 38  & 29  & 26  & 16  & 11  & 4   & 78  & 21  & 8   & 37  \\
\textbf{$GA^{\rm NN,act}$}                   & 15.00                            & 7.61                              & 8.44                             & 9.45                             & 8.38                               & 17.46                             & 19.51                             & 4.17                             & 13.16                             & 16.04                             & 7.69  \\
\textbf{$GA^{\rm 1st,act}$}               & 29.52                            & 19.84                             & 15.63                            & 25.08                            & 22.05                              & 59.95                             & 49.97                             & 6.85                             & 34.62                             & 32.93                             & 14.29 \\
\textbf{$GA^{\rm NN,MtM}$}                   & 21.13                            & 11.88                             & 13.07                            & 16.18                            & 18.02                              & 34.73                             & 33.66                             & 5.32                             & 26.71                             & 24.50                             & 11.34 \\
\textbf{$GA^{\rm 1st,MtM}$}               & 23.87                            & 13.87                             & 12.36                            & 17.92                            & 19.64                              & 44.54                             & 42.86                             & 5.19                             & 31.25                             & 27.79                             & 10.91 \\
\textbf{$GA^{\rm MC,act}$}                   & 12.46                            & 7.07                              & 8.00                             & 9.24                             & 11.89                              & 20.38                             & 27.13                             & 4.35                             & 15.23                             & 15.78                             & 8.76  \\
\textbf{$GA^{\rm MC,MtM}$}                   & 18.71                            & 10.34                             & 12.48                            & 13.32                            & 18.76                              & 25.85                             & 36.44                             & 6.38                             & 25.95                             & 22.75                             & 11.20 \\
\textbf{Percentage Error $GA^{\rm NN,act}$}     & 20.42                            & 7.59                              & 5.54                             & 2.26                             & 29.50                              & 14.33                             & 28.08                             & 4.23                             & 13.58                             & 1.64                              & 12.21 \\
\textbf{Percentage Error $GA^{\rm 1st,act}$} & 136.90                           & 180.64                            & 95.32                            & 171.46                           & 85.49                              & 194.18                            & 84.18                             & 57.48                            & 127.33                            & 108.66                            & 63.09 \\
\textbf{Percentage Error $GA^{\rm NN,MtM}$}     & 12.91                            & 14.93                             & 4.71                             & 21.49                            & 3.97                               & 34.37                             & 7.63                              & 16.68                            & 2.94                              & 7.71                              & 1.21  \\
\textbf{Percentage Error $GA^{\rm 1st,MtM}$} & 27.56                            & 34.13                             & 0.93                             & 34.52                            & 4.69                               & 72.32                             & 17.62                             & 18.58                            & 20.42                             & 22.16                             & 2.62\\
\textbf{Acc. improv. factor (act)} & 6.70 & 23.80 &17.21 & 75.87 & 2.90 & 13.55 & 3.00 & 13.59& 9.38 & 66.26 &5.17\\  
\textbf{Acc. improv. factor (MtM)} & 2.13 & 2.29 & 0.20 & 1.61 & 1.18 & 2.10 & 2.31 & 1.11 & 6.95 & 2.87 & 2.17
\end{tabular}}
\caption{The table reports the GAs (in \% of total exposure) for the MDB portfolios described in Section~\ref{sec real data}, using the different approaches under the assumption that $\operatorname{ELGD} \equiv 0.45$ for all obligors. The percentage error is calculated as $|GA^{\rm NN, act}-GA^{\rm MC, act}|/GA^{\rm MC, act}$, and correspondingly for the other GAs. The accuracy improvement factor is calculated as $|GA^{\rm 1st, act}-GA^{\rm MC, act}|/|GA^{\rm NN, act}-GA^{\rm MC, act}|$, resp. for the MtM GAs.}\label{tbl_mdb_gas_045}
\end{table}

\begin{table}[h!]
\resizebox{\textwidth}{!}{
\begin{tabular}{@{}clllllllllll@{}}
\toprule
\multicolumn{1}{l}{}                 & CAF    & ADB    & AFDB   & IDB    & CDB    & CABEI  & EADB   & IBRD   & TDB    & BOAD   & EBRD   \\ \midrule
\textbf{Borrowers}                   & 16  & 38  & 29  & 26  & 16  & 11  & 4   & 78  & 21  & 8   & 37  \\
\textbf{$GA^{\rm NN,act}$}                   & 4.46                             & 3.67                              & 3.88                             & 4.15                             & 3.77                               & 6.58                              & 5.62                              & 2.33                             & 5.61                              & 5.14                              & 3.03   \\
\textbf{$GA^{\rm 1st,act}$}               & 20.35                            & 13.93                             & 10.68                            & 17.51                            & 14.56                              & 41.84                             & 29.58                             & 4.57                             & 24.81                             & 22.77                             & 9.61   \\
\textbf{$GA^{\rm NN,MtM}$}                   & 15.46                            & 8.86                              & 8.38                             & 12.19                            & 12.36                              & 25.64                             & 20.49                             & 3.66                             & 22.19                             & 18.69                             & 6.94   \\
\textbf{$GA^{\rm 1st,MtM}$}               & 17.93                            & 10.63                             & 9.03                             & 13.61                            & 13.62                              & 34.11                             & 25.42                             & 3.60                             & 26.85                             & 21.42                             & 7.59   \\
\textbf{$GA^{\rm MC,act}$}                   & 4.35                             & 2.65                              & 3.13                             & 3.38                             & 3.96                               & 5.98                              & 12.72                             & 1.49                             & 6.88                              & 5.58                              & 2.33   \\
\textbf{$GA^{\rm MC,MTM}$}                   & 11.58                            & 7.78                              & 8.09                             & 10.19                            & 11.72                              & 18.88                             & 22.12                             & 3.74                             & 15.98                             & 13.39                             & 6.84   \\
\textbf{Percentage Error $GA^{\rm NN,act}$}     & 2.52                             & 38.66                             & 24.03                            & 22.72                            & 4.81                               & 10.02                             & 55.79                             & 56.28                            & 18.45                             & 7.95                              & 29.96  \\
\textbf{Percentage Error $GA^{\rm 1st,act}$} & 367.73                           & 425.79                            & 241.11                           & 417.94                           & 267.74                             & 599.68                            & 132.55                            & 206.47                           & 260.61                            & 308.03                            & 312.35 \\
\textbf{Percentage Error $GA^{\rm NN,MtM}$}     & 33.51                            & 13.92                             & 3.60                             & 19.63                            & 5.48                               & 35.78                             & 7.36                              & 2.22                             & 38.87                             & 39.58                             & 1.46   \\
\textbf{Percentage Error $GA^{\rm 1st,MtM}$} & 54.80                            & 36.69                             & 11.56                            & 33.52                            & 16.22                              & 80.65                             & 14.91                             & 3.61                             & 68.05                             & 59.97                             & 11.01 \\
\textbf{Acc. improv. factor (act)} & 145.92 &11.01 & 10.03 & 18.40 & 55.66 & 59.85 & 2.38 & 3.67 & 14.13 &38.75 & 10.43\\  
\textbf{Acc. improv. factor (MtM)} & 1.64 & 2.64 & 3.21 & 1.71 & 2.96& 2.25 & 2.03 & 1.63 & 1.75 & 1.56& 7.54
\end{tabular}}
\caption{The GAs of the MDBs, described in Section~\ref{sec real data}, using the different approaches under the assumption that $\operatorname{ELGD} \equiv 0.1$ for all obligors. The percentage error is calculated as $|GA^{\rm NN, act}-GA^{\rm MC, act}|/GA^{\rm MC, act}$, and correspondingly for the other GAs. The accuracy improvement factor is calculated as $|GA^{\rm 1st, act}-GA^{\rm MC, act}|/|GA^{\rm NN, act}-GA^{\rm MC, act}|$, resp. for the MtM GAs.}\label{tbl_mdb_gas_01}
\end{table}

\subsection{Application to Stress  Testing}

The DL approach presented in this paper offers a readily applicable framework to study the effects of changes in the underlying portfolio characteristics on the GA, compare also Section~\ref{sec sensitivity analysis}. A key application of this feature is stress testing -- specifically, examining how name concentration risk in loan portfolios responds to stressed market environments. 

Stress testing comes in different levels. At the basic level, scenarios are analysed in which individual parameters are varied while all others remain unchanged. This provides an initial indication of the sensitivity to potential risk factors. More advanced stress testing techniques stress multiple parameters simultaneously, taking into account potential interactions between risk factors leading to tail events. \cite{Basel2009stresstesting,Basel2018stresstesting} points to the importance of studying the impact of such recession-type scenarios with system-wide interactions and feedback effects, where scenarios can be based on historical data or hypothetical. Finally, reverse stress tests do the exact opposite, analysing which events lead to adverse outcomes.

For the construction of stressed scenarios for the MDB portfolios, we follow \cite{HardySchmieder2023}, who show how the non performing loans (NPL) stock ratio as an indicator of credit risk increases during crisis periods for countries from different regions. The authors find that loss rates (i.e. the product of PD and LGD) increase by 1.5 percentage points (\%pt) in the medium stress scenarios and by 5.5\%pt up to 12.1\%pt in the severe-to-extreme stress scenarios for emerging economies from a normal average level of around 1.9\%. In addition, the authors report that LGDs also vary with the business cycle and that LGDs and default probabilities are positively correlated, with LGDs being less volatile than PDs. For advanced economies they document an average LGD of 26\% in normal periods, 34\% in medium and 41\% in severe stress scenarios. Due to a lack of data, they do not provide LGD levels for emerging economies for different market scenarios but only historical average LGD rates, which range from 59\% to 62\%. Nevertheless, the authors suggest that the absolute increase in LGDs as estimated for advanced economies (around 15\%pt in severe stress scenarios) should be added to the historical average LGDs for emerging markets to obtain stressed scenarios, resulting in stressed LGDs (excluding PCT effects) of around 75\%.
Default rates can then be approximated by dividing the loss rates by the LGD rates for each stress scenario, implying average PDs of around 3\% under normal conditions and an increase of 7\%pt in severe stress scenarios and 18\%pt in extreme scenarios. Thus, our stress test example considers similar increases in PDs and, for severe scenarios, a joint increase in LGDs.
Typically default rates and LGD rates peak in times of economic downturn (compare \cite{Moodys2016}). Thus, we can interpret the scenarios with high default and LGD rates also as scenarios with decreasing GDP growth rates and increasing unemployment rates. 


We turn first to the basic stress test. Here we examine the effect on the GA of a gradual increase in the PD for all obligors at the same time. In the case of the rating-based MtM approach, this is equivalent to a simultaneous downgrade of all obligors in the loan portfolio. The asset correlation $\rho_n$ is calculated by applying the IRB formula to the corresponding increased PDs. In Figure~\ref{fig:stress_pds} we consider the resulting impact on the GA for three MDB portfolios (AFDB, CAF and EBRD) 
for the actuarial and MtM cases, respectively. A simultaneous increase in the default risk of all borrowers in most cases leads to a slight increase in the GA in the actuarial case. Overall default risk increases but the portfolio does not necessarily become significantly more concentrated. In the MtM case, the effect is more pronounced as downgrades affect not only the repayment of the nominal amount at maturity but also MtM losses during the holding period. In addition, a downgrade can result in a much larger increase in PD for some borrowers than for others, e.g. a downgrade from BBB+ to BBB implies an increase in PD of 0.02\%pt, whereas a downgrade from BB+ to BB corresponds to an increase in PD of 22.\%pt. As a result, portfolios can become much more concentrated from simultaneous downgrades than from simultaneous increases in PDs.

\begin{figure}
    \centering
    \includegraphics[width=0.95\linewidth]{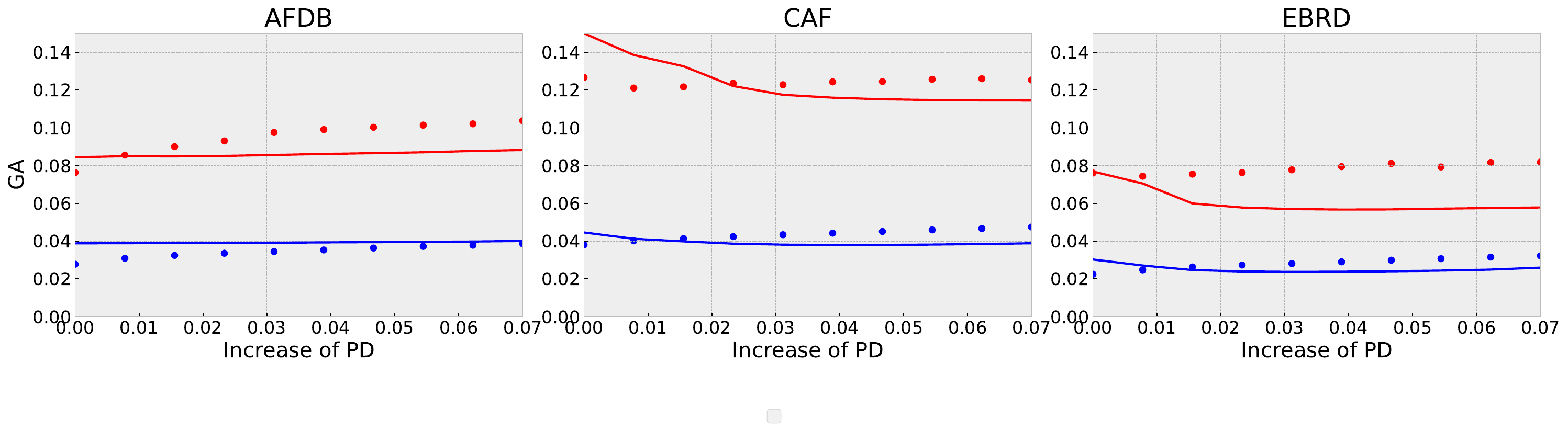}
        \includegraphics[width=0.95\linewidth]{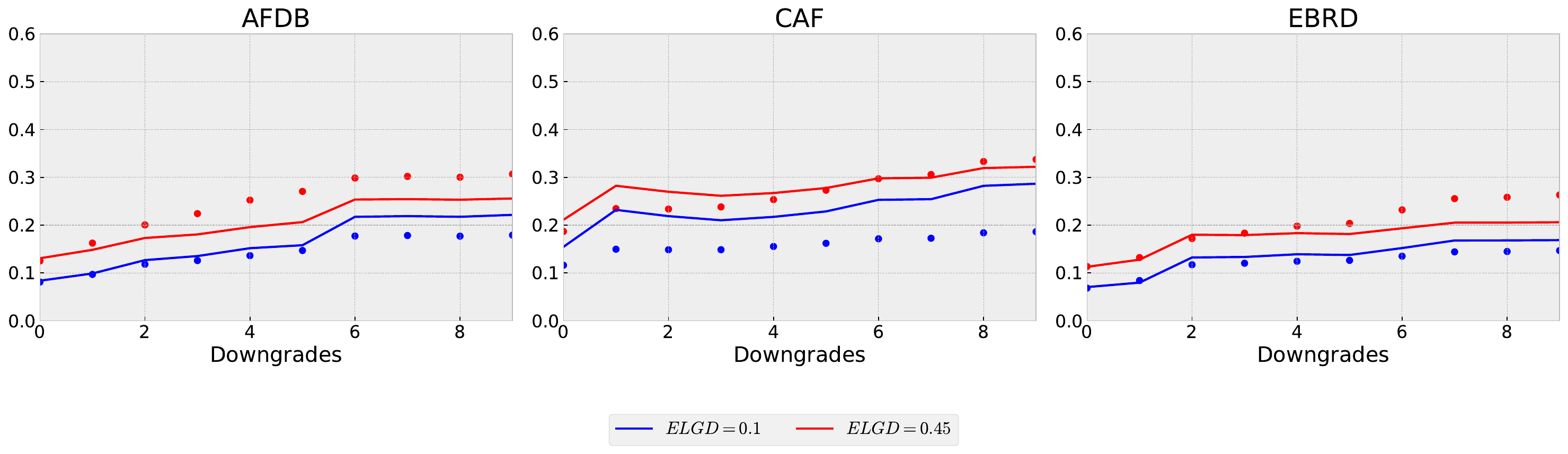}
    \caption{ The figure shows how the GA changes in response to a gradual increase in the PD (in \%pt) of all obligors at the same time. Top: actuarial approach, Bottom: MtM approach. Solid curves: NN GAs, dotted curves: MC GAs.}
    \label{fig:stress_pds}
\end{figure}

Next, we examine the effect of an increase in exposure concentration on the GA. To realize an increase in exposure concentration we readjust the portfolio exposure shares according to 
\begin{equation}\label{eq:ead_scaling}    
a_n' =  \frac{a_n^\alpha}{\sum_{i=1}^N a_i^\alpha} \text{ for } n =1,\dots,N,
\end{equation}
where the exponent $\alpha$ is determined such that 
\[
\sum_{n=1}^N\left(\frac{a_n'}{\sum_{i=1^N}{a_i'}}\right)^2 = h
\]
for a given level $h$. This means we rescale the exposures such that the Herfindahl-Hirschman index (HHI), see \cite{rhoades1993herfindahl}, of the rescaled portfolio attains a certain level $h$.
Figure~\ref{fig:stress_hhi} shows the results for the four MDB portfolios, when increasing the HHI to a level $h$ shown on the $x$-axis. As expected the GAs generally increase with increasing exposure concentration. The NN approach captures this effect and is generally quite accurate showing the robustness of our approach to varying market environments. In some cases (e.g. the MtM GA for the CAF portfolio), the NN GA overestimates the true MC GA. This phenomenon can be attributed to the fact that the NN was trained on parametric distributions, specifically those with exponentially distributed exposures. These training distributions may not accurately represent the observed exposure distribution of the portfolio. To further analyse this effect, we computed the Wasserstein distances between the observed exposure distributions and the maximum likelihood-estimated exponential distributions, based on on 10,000 samples from the latter. The findings reveal that the deviation between the NN GA and the MC GA generally increases with larger Wasserstein distances. For example, the Wasserstein distance between exposure distribution and estimated exponential distribution for the CAF portfolio is 0.0198, compared to 0.0090 for AFDB and 0.0060 for EBRD. However, the robustness of the NN GA can be enhanced by expanding the training dataset to include also stressed market scenarios, as considered in this analysis.\\

\begin{figure}
    \centering
    \includegraphics[width=0.95\linewidth]{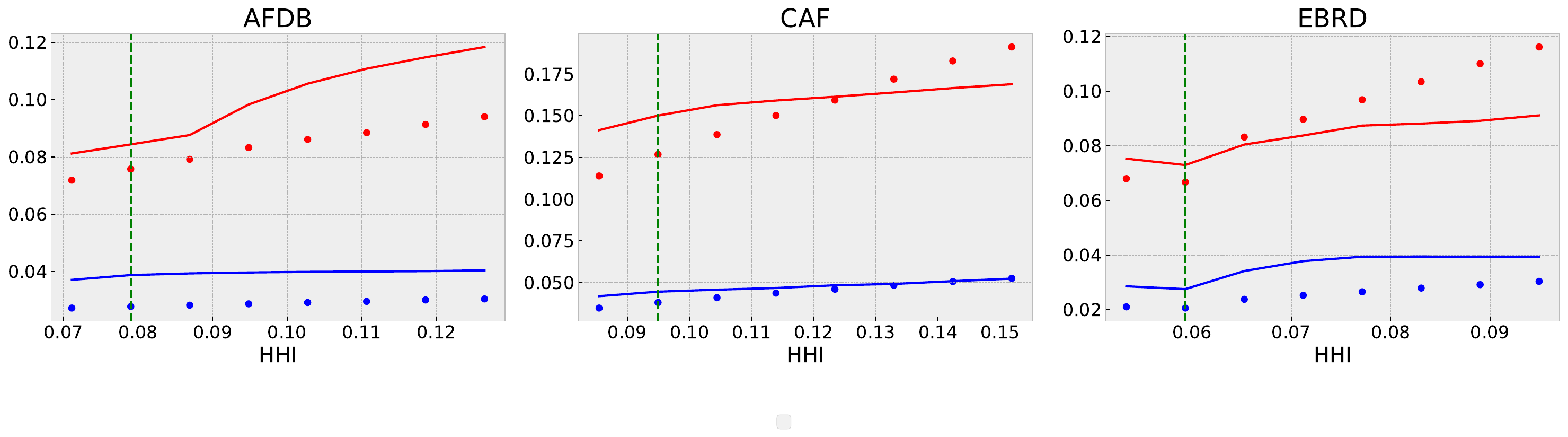}
    \includegraphics[width=0.95\linewidth]{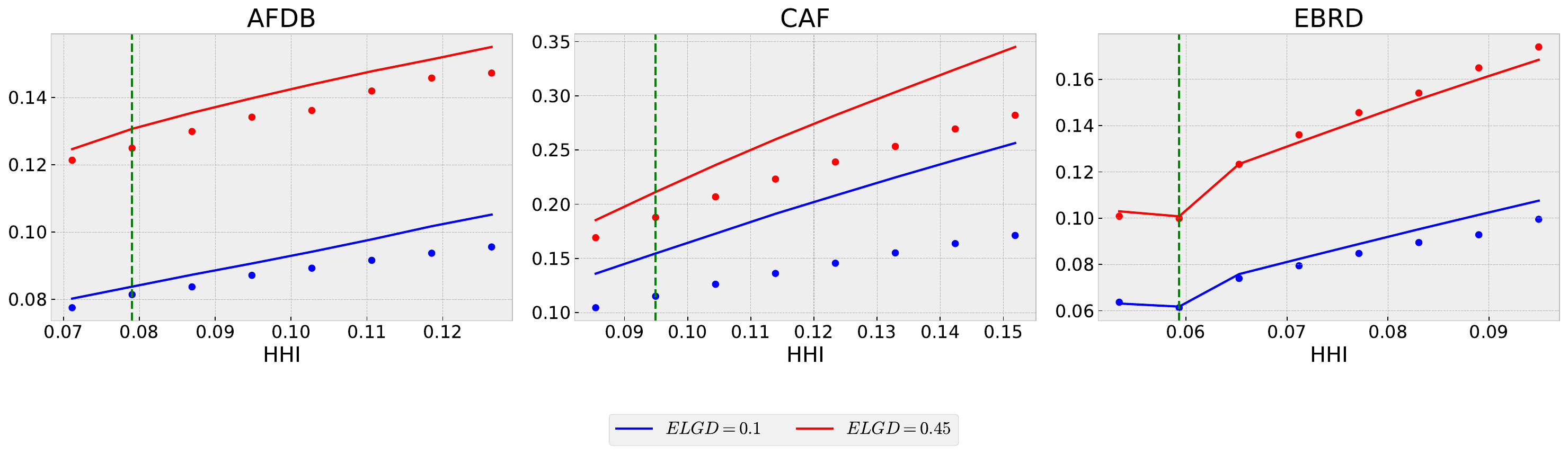}
    \caption{ The figure shows the effect on the GA of simultaneously increasing the exposure concentration of the three MDB portfolios under consideration by setting the HHI to a fixed level $h$ shown on the $x$-axis. Top: actuarial approach, Bottom: MtM approach. Green dashed line: observed HHI level. Solid curves: NN GAs, dotted curves: MC GAs.}
    \label{fig:stress_hhi}
\end{figure}

Overall, we observe that both increasing probabilities of default and increasing exposure concentrations, considered in isolation, lead to an increase in the resulting GA in most cases. However, in realistic stress scenarios, PDs and exposure concentrations may increase simultaneously. We therefore describe four different stress scenarios below that take this into account. The stress parameters are chosen in accordance with the literature discussed above. 

\begin{itemize}
    \item \textbf{Scenario 1, observed case}:\\ PD as observed, $\operatorname{EAD}$ as observed, $\ELGD = 0.1$;
    \item \textbf{Scenario 2, medium stress}:\\ PD increased by $2 \%$ in the actuarial case, downgrade of 1 notch in the MtM case. $\operatorname{EAD}$s are scaled according to \eqref{eq:ead_scaling} such that $\operatorname{HHI}$ is increased by $25\%$, $\ELGD = 0.1$;
    \item \textbf{Scenario 3, severe stress}:\\ PD increased by $4 \%$ in the actuarial case, downgrade of 2 notch in the MtM case. $\operatorname{EAD}$s are scaled according to \eqref{eq:ead_scaling} such that $\operatorname{HHI}$ is increased by $50\%$, $\ELGD = 0.45$;
    \item \textbf{Scenario 4, extreme stress}:\\ PD increased by $7 \%$ in the actuarial case, downgrade of 3 notches in the MtM case. $\operatorname{EAD}$s are scaled according to \eqref{eq:ead_scaling} such that  $\operatorname{HHI}$ is increased by $100\%$, $\ELGD = 0.45$;
\end{itemize}
Asset correlations are calculated using the IRB formula with the PDs in the corresponding scenarios. 
Hence, the scenarios reflect different stress levels and account for various interactions and feedback effects between individual variables. Table \ref{tbl:stress_mtm} shows the NN GA for the eleven MDB portfolios in the different stress scenarios. As expected, in almost all cases, the GA increases as the stress increases. The modest decrease in the GA in Scenario 2 compared to Scenario 1 for the EBRD and BOAD portfolios is due to an approximation error between the NN GA and the MC GA. Overall, the mean absolute percentage error in the medium stress scenario is comparable to that in the observed case, though it increases slightly in the severe and extreme stress scenarios. This increase can be attributed to the growing divergence between the exposure and PD distributions used to train the NN and those specified in the stressed market environments. The performance can be improved by extending the training data set. This is particularly relevant in the context of reverse stress testing, as our results help to identify scenarios that lead to heightened name concentration. These scenarios can then be incorporated into the training dataset to enhance the accuracy of computations under such stressed market conditions.

\begin{table}[htb!]
\centering
\resizebox{\textwidth}{!}{
\begin{tabular}{@{}clllllllllll@{}}
\toprule
\multicolumn{1}{l}{\textbf{Actuarial}} & \multicolumn{1}{r}{\textbf{CAF}} & \multicolumn{1}{r}{\textbf{ADB}} & \multicolumn{1}{r}{\textbf{AFDB}} & \multicolumn{1}{r}{\textbf{IDB}} & \multicolumn{1}{r}{\textbf{CDB}} & \multicolumn{1}{r}{\textbf{CABEI}} & \multicolumn{1}{r}{\textbf{EADB}} & \multicolumn{1}{r}{\textbf{IBRD}} & \multicolumn{1}{r}{\textbf{TDB}} & \multicolumn{1}{r}{\textbf{BOAD}} & \multicolumn{1}{r}{\textbf{EBRD}} \\ \midrule
\textbf{Scenario 1}           & 4.46                             & 3.67                             & 3.88                              & 4.15                             & 3.77                             & 6.58                               & 5.62                              & 2.33                              & 5.61                             & 5.14                              & 3.03                              \\
\textbf{Scenario 2}           & 4.57                             & 3.78                             & 4.03                              & 4.15                             & 3.91                             & 6.73                               & 16.72                             & 2.38                              & 6.91                             & 4.71                              & 2.97                              \\
\textbf{Scenario 3}           & 15.70                            & 8.10                             & 10.63                             & 10.50                            & 13.42                            & 18.69                              & 24.66                             & 5.07                              & 17.13                            & 16.45                             & 9.10                              \\
\textbf{Scenario 4}           & 16.73                            & 14.55                            & 13.30                             & 14.55                            & 15.36                            & 21.47                              & 28.83                             & 5.59                              & 19.63                            & 15.73                             & 10.17                             \\ \hline
\bottomrule
\midrule
\multicolumn{1}{l}{\textbf{MtM}} & \multicolumn{1}{r}{\textbf{CAF}} & \multicolumn{1}{r}{\textbf{ADB}} & \multicolumn{1}{r}{\textbf{AFDB}} & \multicolumn{1}{r}{\textbf{IDB}} & \multicolumn{1}{r}{\textbf{CDB}} & \multicolumn{1}{r}{\textbf{CABEI}} & \multicolumn{1}{r}{\textbf{EADB}} & \multicolumn{1}{r}{\textbf{IBRD}} & \multicolumn{1}{r}{\textbf{TDB}} & \multicolumn{1}{r}{\textbf{BOAD}} & \multicolumn{1}{r}{\textbf{EBRD}} \\ \midrule
\textbf{Scenario 1}           & 15.46                            & 8.86                             & 8.38                              & 12.19                            & 12.36                            & 25.64                              & 20.49                             & 3.66                              & 22.19                            & 18.69                             & 6.94                              \\
\textbf{Scenario 2}           & 31.02                            & 10.15                            & 10.79                             & 16.10                            & 30.98                            & 34.11                              & 35.60                             & 4.31                              & 28.07                            & 17.94                             & 9.89                              \\
\textbf{Scenario 3}           & 43.99                            & 15.50                            & 20.45                             & 24.28                            & 43.23                            & 67.75                              & 149.65                            & 9.05                              & 43.65                            & 32.04                             & 27.81                             \\
\textbf{Scenario 4}           & 57.62                            & 18.40                            & 23.91                             & 30.73                            & 57.62                            & 94.17                              & 196.83                            & 10.65                             & 57.23                            & 54.38                             & 35.97                             \\ \bottomrule
\end{tabular}
}
\caption{NN GA (in \%) for the MDB portfolios under the different stress scenarios for both the actuarial case (upper panel) and the MtM case (lower panel).}\label{tbl:stress_mtm}
\end{table}


\section{Conclusion}\label{sec conclusion}


In this paper, we proposed a new DL-based approach for the quantification of name concentration risk in small portfolios. While analytic approximations as, e.g., the GA suggested in \cite{GordyLuetkebohmert2013} and currently applied by S\&P have the strong advantage of being easily implementable, they were originally developed for commercial bank portfolios which are much more diversified than those of specialized institutions such as MDBs. As a consequence, these approaches may significantly overestimate name concentration risk when applied to portfolios with very few borrowers. In contrast, our NN-based GA is specifically designed for these very small portfolios, enabling a highly accurate and rapid measurement of concentration risk. 
In particular, our approach allows for precise evaluation of how variations in individual borrower characteristics impact the portfolio’s name concentration risk, as well as how this risk shifts when additional loans are incorporated into an existing portfolio. Furthermore, our methodology offers an efficient framework for conducting stress testing applications.

\bibliographystyle{agsm}
	\bibliography{Literature}

	\appendix

\newpage 

\part*{\protect\large E-companion to ``Measuring Name Concentrations through Deep Learning''}

\setcounter{page}{1}

\renewcommand*{\thesection}{EC.\arabic{section}} 

\section{Review of GA methodology}\label{app GA methodology}

An analytical expression for the GA can be obtained as a first order asymptotic approximation of the difference (\ref{equ def CR}) and can be expressed as (compare e.g. \cite{GordyLuetkebohmert2013} and \cite{Voropaev})
$$
\begin{array}{ccl}
\alpha_q(L)-\alpha_q(\mathbb{E}[L|X])
&\approx&-\frac{1}{2} \frac{1}{h(x)}\frac{d}{dx}\left[\frac{h(x)}{m_1'(x)}m_2(x)\right]\Big|_{x=x_q} 
\end{array}
$$
where $x_q=\alpha_q(X)$ is the $q^{\rm th}$ quantile and $h(\cdot)$ is the probability density function (pdf) of the systematic risk factor $X$, and where $m_i(x)$ denotes the $i^{\rm th}$ conditional moment of the loss ratio $L$ given the systematic factor $X=x$. 

\subsection{GA in the actuarial approach}\label{app actuarial GA}

Assuming the CreditRisk$^+$ setting with Gamma-distributed risk factor $X$ (compare Section \ref{sec actuarual}), 
\cite{GordyLuetkebohmert2013} parameterize the inputs to the GA in terms of EL reserve requirement $$\mathcal{R}_n=\ELGD_n \PD_n$$ and UL capital requirement \begin{equation}\label{equ UL capital CR+}
\mathcal{K}_n=\ELGD_n\PD_n \omega_n(x_q-1)
\end{equation}
and then substitute these by the corresponding expressions in the Internal Ratings Based (IRB) approach in \cite{Basel2011}, so that the factor loadings $\omega_n$ do not need to be specified. They then derive the following analytic solution for the GA in the actuarial framework of CreditRisk$^+$ as
\begin{equation}\label{equ GA 1st order actuarial}
\begin{array}{ccl}
GA^{\rm 1st,act}&=&\displaystyle  \frac{1}{2\mathcal{K}^*} \sum_{n=1}^N a_n^2 \Big[\delta\left(\mathcal{C}_n (\mathcal{K}_n+\mathcal{R}_n)+(\mathcal{K}_n+\mathcal{R}_n)^2 \frac{\VLGD^2_n}{\ELGD_n^2}\right)\\
&&\displaystyle\hspace{1cm}  - \mathcal{K}_n \left(\mathcal{C}_n +2(\mathcal{K}_n+\mathcal{R}_n) \frac{\VLGD^2_n}{\ELGD_n^2}\right)\Big]
\end{array}
\end{equation}
where $\mathcal{C}_n=\frac{\VLGD_n^2+\ELGD_n^2}{\ELGD_n}$ and $\VLGD_n^2=\nu\cdot \ELGD_n(1-\ELGD_n)$ for some parameter $\nu\in(0,1).$ 
Moreover, $\mathcal{K}^*=\sum_{n=1}^N a_n \mathcal{K}_n$ and 
$$
\delta\equiv-(x_q-1) \frac{h'(x_q)}{h(x_q)}=(x_q-1)\left(\xi+\frac{1-\xi}{x_q}\right).
$$
By ignoring terms that are quadratic or of higher order in $\PD$s, the authors also derive a simplified version of the above GA which is implemented in the current S\&P methodology for capital adequacy of MDBs with precision parameter fixed to $\xi=0.25$.

\subsection{GA in the MtM approach}\label{app MtM GA}

\cite{GordyMarrone2012} show that the GA can be derived analytically for a large class of single-factor MtM  models and explicitly apply their methodology to CreditMetrics and KMV Portfolio Manager. In the CreditMetrics MtM model the GA can be formulated as 
{\begin{footnotesize}
\begin{equation}\label{eq_GA_formula_1st_order_mtm}
\begin{aligned}
GA^{\rm 1st, MtM}&=\frac{1}{2} e^{-rT} \left(-\alpha_q(X)\frac{
\sum_{n=1}^N a_n^2 \sigma_n^2(\alpha_q(X))
}{\sum_{n=1}^N a_n \mu_n'(\alpha_q(X))} 
+\frac{d}{dx} \left(\frac{\sum_{n=1}^N a_n^2 \sigma_n^2(x)}{
\sum_{n=1}^N a_n \mu_n'(x)
}\right)\Big|_{x=\alpha_q(X)}\right)\\
&=\frac{1}{2} e^{-rT} \Big(-\alpha_q(X)\frac{
\sum_{n=1}^N a_n^2 \sigma_n^2(\alpha_q(X))
}{\sum_{n=1}^N a_n \mu_n'(\alpha_q(X))} 
+\frac{\sum_{n=1}^N a_n^2 \frac{d}{dx} \sigma_n^2(x)|_{x=\alpha_q(X)}}{
\sum_{n=1}^N a_n \mu_n'(\alpha_q(X))
}\\
& \hspace{2cm} -  \frac{\left(\sum_{n=1}^N a_n^2 \sigma_n^2(\alpha_q(X))\right)\cdot\left(\sum_{n=1}^N a_n \mu_n''(\alpha_q(X))\right)}{
(\sum_{n=1}^N a_n \mu_n'(\alpha_q(X)))^2}\Big),
\end{aligned}
\end{equation}
\end{footnotesize}}
where the conditional expectation $\mu_n(x)$ is given by (\ref{equ expected return given x}) and the conditional variance is
$$
\sigma_n^2(x)\equiv \mathbb{V}[R_n|X=x] =\sum_{s=0}^S \left(\xi_{ns}^2(x)+\lambda_{ns}^2(x)\right)\cdot \pi_{ns}(x)-\mu_n^2(x),
$$
where $\xi_{ns}^2(x)=\mathbb{V}[R_n|X=x,S_n=s]$. 
For the calculation of the first and second derivative of the conditional expectation $\mu_n'(x)$ and $\mu_n''(x)$ we further need the derivatives of $\pi_{ns}(x)$ which can be instantly calculated as
\begin{align*} 
\pi_{ns}'(x)&=-\sqrt{\frac{\rho_n}{1-\rho_n}} \phi\left(\frac{C_{g(n),s}-x\sqrt{\rho_n}}{\sqrt{1-\rho_n}}\right)+\sqrt{\frac{\rho_n}{1-\rho_n}}\phi \left(\frac{C_{g(n),s-1}-x\sqrt{\rho_n}}{\sqrt{1-\rho_n}}\right)\\
\pi_{ns}''(x)&=-\frac{\rho_n}{1-\rho_n}\cdot \frac{C_{g(n),s}-x\sqrt{\rho_n}}{\sqrt{1-\rho_n}} \phi\left(\frac{C_{g(n),s}-x\sqrt{\rho_n}}{\sqrt{1-\rho_n}}\right)\\
&\hspace{2cm} +\frac{\rho_n}{1-\rho_n}\cdot \frac{C_{g(n),s-1}-x\sqrt{\rho_n}}{\sqrt{1-\rho_n}} \phi \left(\frac{C_{g(n),s-1}-x\sqrt{\rho_n}}{\sqrt{1-\rho_n}}\right)\\
\pi_{ns}'''(x) &=\left(\frac{\rho_n}{1-\rho_n}\right)^{3/2}\cdot \left(1-\left(\frac{C_{g(n),s}-x\sqrt{\rho_n}}{\sqrt{1-\rho_n}}\right)^2\right) \phi\left(\frac{C_{g(n),s}-x\sqrt{\rho_n}}{\sqrt{1-\rho_n}}\right)\\
	& +\left(\frac{\rho_n}{1-\rho_n}\right)^{3/2}\cdot \left(\left(\frac{C_{g(n),s-1}-x\sqrt{\rho_n}}{\sqrt{1-\rho_n}}\right)^2-1\right) \phi \left(\frac{C_{g(n),s-1}-x\sqrt{\rho_n}}{\sqrt{1-\rho_n}}\right).
\end{align*}
where we used the property $\phi'(x)=-x\phi(x).$

In CreditMetrics, it is assumed that the conditional variance $\xi_{ns}^2(x)=0$ for all $s\geq 1.$ Further, in the default state only idiosyncratic recovery risk is assumed so that $$\xi_{n0}^2(x)=\xi_{n0}^2=\mathbb{V}[R_n|S_n=0]=\left(\frac{F(T)}{P_{n0}}\right)^2 \VLGD^2.
$$

\section{Auxiliary Mathematical Results}

\begin{lemma}\label{lem_cont_distrib}
Let $L(Y)$ be a real-valued random variable on $(\Omega,\mathcal{F},\mathbb{P})$ for all $Y \in \mathbb{K}\subset \R^m$, $m\in \N$, with $\mathbb{K}$ compact. Let 
\begin{equation}\label{eq_map_lem}
\begin{aligned}
\mathbb{K} \times [0,\bar{\ell}]  \ni (Y,\ell) &\mapsto \mathbb{P}(L(Y) \leq \ell) \in \R
\end{aligned}
\end{equation}
be continuous in $(Y,\ell)$ and strictly increasing in $\ell$ on some interval $[0,\bar{\ell}]$. Then, for all $q\in (0,1)$ we have that 
\[
Y \mapsto \inf \left\{\ell\in[0,\bar{\ell}]:\mathbb{P}\left(L(Y) \leq \ell\right) \geq q \right\}=\alpha_q(L(Y))
\]
is continuous.
\end{lemma}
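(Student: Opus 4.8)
The plan is to write $F(Y,\ell):=\mathbb{P}(L(Y)\le\ell)$ and to show directly that $g(Y):=\alpha_q(L(Y))=\inf\{\ell\in[0,\bar{\ell}]:F(Y,\ell)\ge q\}$ is continuous by a subsequential compactness argument, exploiting that $g$ takes values in the compact interval $[0,\bar{\ell}]$. First I would record two structural facts that follow from the hypotheses. Since $\ell\mapsto F(Y,\ell)$ is continuous, the level set $S_Y:=\{\ell\in[0,\bar{\ell}]:F(Y,\ell)\ge q\}$ is closed, and it is nonempty because $F(Y,\bar{\ell})\ge q$ (in the applications $L(Y)\le\bar{\ell}$ almost surely, so $F(Y,\bar{\ell})=1$); hence the infimum defining $g(Y)$ is attained and $F(Y,g(Y))\ge q$. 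Strict monotonicity of $F(Y,\cdot)$ then upgrades this to $F(Y,\ell)>q$ for every $\ell>g(Y)$.

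Now fix $Y\in\mathbb{K}$ and a sequence $Y_k\to Y$. Because $(g(Y_k))_k\subset[0,\bar{\ell}]$ is bounded, it suffices to show that every convergent subsequence has limit $g(Y)$. So suppose $g(Y_{k_j})\to\ell_0$. For one inequality I pass the relation $F(Y_{k_j},g(Y_{k_j}))\ge q$ to the limit: joint continuity of $F$ together with $(Y_{k_j},g(Y_{k_j}))\to(Y,\ell_0)$ yields $F(Y,\ell_0)\ge q$, so $\ell_0\in S_Y$ and therefore $\ell_0\ge g(Y)$.

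The reverse inequality $\ell_0\le g(Y)$ is the crux and is exactly where strict monotonicity is indispensable. Arguing by contradiction, suppose $\ell_0>g(Y)$ and choose $\ell'$ with $g(Y)<\ell'<\ell_0$. By strict monotonicity $F(Y,\ell')>F(Y,g(Y))\ge q$, and joint continuity gives $F(Y_{k_j},\ell')\to F(Y,\ell')>q$, so $F(Y_{k_j},\ell')>q$ for all large $j$. This places $\ell'$ in $S_{Y_{k_j}}$, whence $\ell'\ge g(Y_{k_j})$; letting $j\to\infty$ gives $\ell'\ge\ell_0$, contradicting $\ell'<\ell_0$.

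Combining the two bounds yields $\ell_0=g(Y)$. Since every subsequential limit of the bounded sequence $(g(Y_k))_k$ equals $g(Y)$, the whole sequence converges to $g(Y)$, proving continuity of $g$. The only delicate point is the lower bound: without strict monotonicity the quantile map is merely lower semicontinuous and can genuinely jump at plateaus of $F(Y,\cdot)$, so the strict-increase hypothesis cannot be dropped. Equivalently, one could establish continuity by showing $g$ is both lower and upper semicontinuous, writing $\{g>c\}=\{F(\cdot,c)<q\}$ and, using strict monotonicity again, $\{g<c\}=\bigcup_{\ell<c}\{F(\cdot,\ell)>q\}$, both of which are open by continuity of $F$ in $Y$.
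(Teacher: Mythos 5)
Your proof is correct and follows essentially the same route as the paper's: extract a convergent subsequence of the bounded quantile sequence, identify its limit using joint continuity together with strict monotonicity of $\ell\mapsto F(Y,\ell)$, and conclude that the whole sequence converges. The only (minor) difference is that you identify the subsequential limit via two one-sided inequalities rather than via the exact identity $F_{L(Y_{i_k})}\bigl(F_{L(Y_{i_k})}^{-1}(q)\bigr)=q$ used in the paper, which makes your version slightly more robust at the boundary $\ell=0$ but does not change the argument in substance.
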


\begin{proof}
    Since $(\ell,Y)\mapsto F_{L(Y)}(\ell):=\mathbb{P}(L(Y) \leq \ell)$ is continuous in $(\ell,Y)$ and strictly increasing in $\ell$, we have that 
    $$\alpha_q(L(Y))=\inf\{\ell\in[0,\bar{\ell}]: \mathbb{P}(L(Y)\leq \ell)\geq q\}=F_{L(Y)}^{-1}(q)$$ is the inverse distribution function which is continuous in $q$. 
    Now let $(Y_i)_{i \in \N} \subseteq \mathbb{K}$ be a sequence with $Y_i\rightarrow Y$ as $i \rightarrow \infty$, and consider the sequence $(F_{L(Y_i)}^{-1}(q))_{i\in\mathbb{N}}$ for some fixed $q$. Since $\alpha_q(L(Y))$ is bounded for $Y\in\mathbb{K}$, there exists a subsequence, denoted by $\ell_{i_k}:=F_{L(Y_{i_k})}^{-1}(q)$ for $k\in\mathbb{N}$, which converges to some $\ell^*$. Then, we have that
    $$
    \begin{array}{ccl}
    F_{L(Y)}(\ell^*)&=&\lim_{k\to \infty} F_{L(Y_{i_k})}(\ell_{i_k}) \quad\mbox{(due to the continuity of F)}\\
    &=& \lim_{k\to \infty} F_{L(Y_{i_k})}(F_{L(Y_{i_k})}^{-1}(q))=q =F_{L(Y)}\left(F_{L(Y)}^{-1}(q)\right).
    \end{array}
    $$
    Since $F$ is strictly increasing, we thus obtain $\ell^*=F_{L(Y)}^{-1}(q)$ and the sequence $\{F_{L(Y_{i_k})}^{-1}(q)\}_{k\in\mathbb{N}}$ converges to $\ell^*=F_{L(Y)}^{-1}(q)$ which  proves the continuity of $Y\mapsto \alpha_q(L(Y)).$
\end{proof}


\section{Algorithms}\label{app algorithms}

\begin{algorithm}[h!]
\SetAlgoLined
\SetKwInOut{Input}{Input}
\SetKwInOut{Output}{Output}

{\begin{footnotesize}
\Input{Number $K$ of simulations}

\begin{enumerate}
    \item Determine $\tau_y$ as in equation (\ref{equ tau_y});
    \item Determine $t$ as in equation (\ref{equ choosing t});
 \end{enumerate}
    \For{$k=1,\dots, K$}{
  \begin{enumerate}
    \setcounter{enumi}{2}
    \item Generate $X^{(k)}\sim \operatorname{Gamma}(\xi,\frac{1}{\xi -t})$; 
    \item Generate $\LGD_n^{(k)}$ according to a Beta-distribution with mean $\operatorname{ELGD}_n$ and variance $\nu\cdot \operatorname{ELGD}_n \cdot (1-\operatorname{ELGD}_n)$ for $n=1,\dots,N$;
    \item Generate default indicators from the tilted distribution $Y_n^{(k)}\sim \operatorname{Bernoulli} \left( \frac{\pi_n(X^{(k)}) e^{a_n \LGD_n \tau_y}}{1-\pi_n(X^{(k)})+\pi_n(X^{(k)})e^{a_n \LGD_n\tau_y }}\right)$ for $n=1,\dots,N$;
    \item Calculate the portfolio loss $L^{(k)}=\sum_{n=1}^N a_n \LGD_n^{(k)} Y_n^{(k)}$;
\end{enumerate}
}
\begin{enumerate}
  \setcounter{enumi}{6}
\item Sort the values $(L^{(k)})_{k=1,\dots,K}$ in increasing order and obtain $(L^{(k_j)})_{j=1,\dots,K}$ ;
\item Determine $i^* = 	\inf \left\{i \in \N~|~ \sum_{j=1}^i \bar{r}_{\tau_y,t}(L^{(k_j)};X^{(k_j)})\cdot L^{(k_j)} \geq q\cdot K \right\};$
\end{enumerate}
\Output{Return estimator $L^{(k_{i^*})}$ of $\alpha_q(L)$;}
\hspace{0.5cm}

\caption{IS algorithm for VaR estimation in the CreditRisk$^+$ model.}\label{algo IS CR+}
\end{footnotesize}}
\end{algorithm}

\begin{algorithm}[h!]

{\begin{footnotesize}
\SetAlgoLined
\SetKwInOut{Input}{Input}
\SetKwInOut{Output}{Output}

\Input{Number $K$ of simulations}
\begin{enumerate}
\item Determine $c$ as in \eqref{equ t(c,X) in CM model};
\item Determine $\mu$ as in \eqref{mu_x};
\end{enumerate}
\For{$k=1,\dots, K$}{
\begin{enumerate}
  \setcounter{enumi}{2}
    \item Generate $X^{(k)} \sim\mathcal{N}(\mu,1)$ and determine $t(c,X^{(k)})$ as in equation (\ref{equ t(c,X) in CM model});
     \item Generate $\LGD_n^{(k)}$ according to a Beta-distribution with mean $\operatorname{ELGD}_n$ and variance $\nu\cdot \operatorname{ELGD}_n \cdot (1-\operatorname{ELGD}_n)$ for $n=1,\dots,N$;
    \item Generate a realization of $(Z_1^{(k)},\ldots,Z_N^{(k)})$ with defaults conditionally independent given $X^{(k)}$ and state $s$ probabilities of borrower $n$ given by
    $$
    \frac{\exp(-ta_n \frac{P_{nT}(s)}{P_{n0}})\pi_{ns}(X^{(k)})}{\sum_{s=1}^S e^{-ta_n \frac{P_{nT}(s)}{P_{n0}}}\pi_{ns}(X^{(k)})};
    $$
    \item Calculate the portfolio loss
$$ L^{(k)}= -\sum_{n=1}^N a_n \sum_{s=1}^S \frac{P_{nT}(s)}{P_{n0}} Z_n^{(k)}(s)$$
\end{enumerate}
}
\begin{enumerate}
  \setcounter{enumi}{6}
\item Sort the values $(L^{(k)})_{k=1,\dots,K}$ in increasing order and obtain $(L^{(k_j)})_{j=1,\dots,K}$ ;
\item Determine ${\footnotesize i^* = 	\inf \left\{ i \in \N ~|~ \sum_{j=1}^i \left( M_{L|X^{(k_j)}}(t) \cdot r_{\mu(X^{(k_j)})} \cdot e^{-t L^{(k_j)}} \right) \cdot L^{(k_j)}\geq q\cdot K \right\}}$
    with
    $
    r_{\mu(X^{(k)})}=\exp(-\mu X^{(k)}+\frac{1}{2}\mu^2);
    $

\end{enumerate}
\Output{Return estimator $L^{(k_{i^*})}$ of $\alpha_q(L)$;}
\hspace{0.5cm}

\caption{IS algorithm for ratings-based CreditMetrics model.}\label{algo IS CM}

\end{footnotesize}}
\end{algorithm}

\begin{algorithm}[h]
{\begin{footnotesize}
\SetAlgoLined
\SetKwInOut{Input}{Input}
\SetKwInOut{Output}{Output}

\Input{Distributions to sample $\PD, ELGD, a$ and $\omega$; Distribution to sample the number of obligors $N_{\operatorname{Obligors}}$;  Maximal number of obligors $\overline{N}$;  Number of iterations $N_{\operatorname{Iter}} \in \N$; Quantile level $q\in [0,1]$;  Hyperparameters of the NN;}
Initialize the neural network $GA^{\rm NN,act} \in  \mathfrak{N}^{\varphi}_{4N+1,1}$;\\
\For{$\operatorname{iteration} =1,\dots, N_{\operatorname{Iter}}  $}{
Sample the number of obligors $N_{\operatorname{Obligors}}\in \N$;\\
Sample $a = (a_{j})_{ j=1,\dots,N_{\operatorname{Obligors}}}$, $\operatorname{ELGD} = (\operatorname{ELGD}_{j})_{j=1,\dots,N_{\operatorname{Obligors}}}$, $\operatorname{PD} = (\operatorname{PD}_{j})_{j=1,\dots,N_{\operatorname{Obligors}}}$;, $\omega = (\omega_{j})_{j=1,\dots,N_{\operatorname{Obligors}}}$;\\
Compute the quantile loss $Q^L$ according to Algorithm~\ref{algo IS CR+};
 \\
 Compute the conditional loss $C^L = \sum_{n=1}^N a_n \ELGD_n \pi_n(\alpha_q(X))$; \\
Compute the granularity adjustment $GA = Q^L - C^L \in \R$; \\
Compute $GA^{\rm 1st,act}$ as in \eqref{equ GA 1st order actuarial};\\
Minimize 
\[
\bigg(GA^{\rm NN,act} \left(\widetilde{a}, \widetilde{\operatorname{PD}}, \widetilde{\operatorname{ELGD}} ,\widetilde{\omega},GA^{\rm 1st,act} \right) - GA\bigg)^2
\]
with the backpropagation algorithm, where 
$\widetilde{a} = \left(\left( a_{j} \right)_{j=1,\dots,N_{\operatorname{Obligors}}},0, \cdots,0 \right) \in \R^{\overline{N}}$, $\widetilde{\operatorname{PD}} = \left(\left({\operatorname{PD}}_{j}\right)_{j=1,\dots,N_{\operatorname{Obligors}}}, 0, \cdots,0 \right) \in \R^{\overline{N}}$, $ \widetilde{\operatorname{ELGD}} =\left(\left({\operatorname{ELGD}}_{j}\right)_{j=1,\dots,N_{\operatorname{Obligors}}}, 0, \cdots,0 \right) \in \R^{\overline{N}}$, $ \widetilde{\omega} =\left(\left(\omega_{j}\right)_{j=1,\dots,N_{\operatorname{Obligors}}}, 0, \cdots,0 \right) \in \R^{\overline{N}}$ ;

}
\Output{Neural network $GA^{\rm NN,act}$;}
 \caption{Training of a NN for the GA in the actuarial approach}\label{algo_actuarial}
 \end{footnotesize}}
\end{algorithm}

\begin{algorithm}[h!]
{\begin{footnotesize}
\SetAlgoLined
\SetKwInOut{Input}{Input}
\SetKwInOut{Output}{Output}

\Input{Distributions to sample $\ELGD, a,\rho,c, \tau, g $; Distribution to sample the number of obligors $N_{\operatorname{Obligors}}$;  Transition Matrix $(p_{s,s,})_{s,s' =0,\dots,S}$; Maximal number of obligors $\overline{N}$; Number of iterations $N_{\operatorname{Iter}} \in \N$;  Quantile level $q\in [0,1]$; }
Initialize the neural network $GA^{\rm NN, MtM} \in \mathfrak{N}^{\varphi}_{6N+1,1}$;\\
\For{$\operatorname{iteration} =1,\dots, N_{\operatorname{Iter}}  $}{
Sample number of obligors $N_{\operatorname{Obligors}} \in \N$;\\ Sample $a = (a_{j})_{j=1,\dots,N_{\operatorname{Obligors}}}$, $\operatorname{ELGD} = (\operatorname{ELGD}_{j})_{j=1,\dots,N_{\operatorname{Obligors}}}$, $\rho = (\rho_{j})_{j=1,\dots,N_{\operatorname{Obligors}}}$, $\operatorname{c} = (\operatorname{c}_{j})_{j=1,\dots,N_{\operatorname{Obligors}}}$,  $(g_{j})_{j=1,\dots,N_{\operatorname{Obligors}}}$, and $\tau = (\operatorname{\tau}_{j})_{ j=1,\dots,N_{\operatorname{Obligors}}}$;\\
Compute the quantile loss $Q^L$ with Algorithm~\ref{algo IS CM}; \\
Compute the conditional loss $C^L = -\sum_{n=1}^N a_n \cdot \mu_n(\alpha_{1-q}(X))$;\\
Compute the granularity adjustment $GA =e^{-rT} \left( Q^L - C^L \right)\in \R$; \\
Compute $GA^{\rm 1st, MtM}$ as in \eqref{eq_GA_formula_1st_order_mtm};\\
Minimize 
\[
\bigg(GA^{\rm NN, MtM}\left(\widetilde{a}, \widetilde{\operatorname{ELGD}} ,\widetilde{\rho} ,\widetilde{\operatorname{c}},\widetilde{g},\widetilde{\tau},GA^{\rm 1st, MtM} \right) - GA\bigg)^2
\]
with the backpropagation algorithm, where 
$\widetilde{a} = \left( \left(a_{j}\right)_{j=1,\dots,N_{\operatorname{Obligors}}},0, \cdots,0 \right) \in \R^{\overline{N}}$,
$ \widetilde{\operatorname{ELGD}} =\left(\left({\operatorname{ELGD}}_{j}\right)_{j=1,\dots,N_{\operatorname{Obligors}}}, 0, \cdots,0 \right) \in \R^{\overline{N}}$, $ \widetilde{\rho} =\left(\left(\rho_{j}\right)_{j=1,\dots,N_{\operatorname{Obligors}}}, 0, \cdots,0 \right) \in \R^{\overline{N}}$ ,
$\widetilde{\operatorname{c}} = \left(\left({\operatorname{c}}_{j}\right)_{j=1,\dots,N_{\operatorname{Obligors}}}, 0, \cdots,0 \right) \in \R^{\overline{N}}$,
$\widetilde{\operatorname{g}} = \left(\left({\operatorname{g}}_{j}\right)_{j=1,\dots,N_{\operatorname{Obligors}}}, 0, \cdots,0 \right) \in \R^{\overline{N}}$, 
$\widetilde{\tau} = \left(\left({\tau}_{j}\right)_{j=1,\dots,N_{\operatorname{Obligors}}}, 0, \cdots,0 \right) \in \R^{\overline{N}}$; 

}
\Output{Neural network $GA^{\rm NN, MtM}$;}
 \caption{Training of a NN for the GA in the MtM approach}\label{algo_MtM}
    
\end{footnotesize}}
\end{algorithm}

\end{document}